\newtheorem{theorem}{Theorem}[section]
\newtheorem{definition}[theorem]{Definition}
\newtheorem{example}[theorem]{Example}
\newtheorem{lemma}[theorem]{Lemma}
\newtheorem{corollary}[theorem]{Corollary}
\begin{document}

\title{Access-based Intuitionistic Knowledge}

\author{Steffen Lewitzka\thanks{Universidade Federal da Bahia -- UFBA,
Instituto de Matem\'atica e Estatistica,
Departamento de Ci\^encia da Computa\c c\~ao,
40170-110 Salvador BA,
Brazil,
steffenlewitzka@web.de}}

\maketitle

\begin{abstract}
We introduce the concept of \textit{access-based} intuitionistic knowledge which relies on the intuition that agent $i$ knows $\varphi$ if $i$ has found \textit{access to a proof} of $\varphi$. Basic principles are distribution and factivity of knowledge as well as $\square\varphi\rightarrow K_i\varphi$ and $K_i(\varphi\vee\psi) \rightarrow (K_i\varphi\vee K_i\psi)$, where $\square\varphi$ reads `$\varphi$ is proved'. The formalization extends a family of classical modal logics designed in [Lewitzka 2015, 2017, 2019] as combinations of $\mathit{IPC}$ and $\mathit{CPC}$ and as systems for the reasoning about proof, i.e. intuitionistic truth. We adopt a formalization of common knowledge from [Lewitzka 2011] and interpret it here as \textit{access-based} common knowledge. We compare our proposal with recent approaches to intuitionistic knowledge [Artemov and Protopopescu 2016; Lewitzka 2017, 2019] and bring together these different concepts in a unifying semantic framework based on Heyting algebra expansions.
\end{abstract} 


\section{Introduction}

Our investigation is inspired by recent approaches to a formal concept of \textit{intuitionistic knowledge}, i.e. formalizations of knowledge that are in accordance with intuitionistic or constructive reasoning. In particular, we consider Intuitionistic Epistemic Logic $\mathit{IEL}$ introduced by Artemov and Protopopescu \cite{artpro} where intuitionistic knowledge is explained as the product of \textit{verification}. Some principles of that approach are adopted by Lewitzka \cite{lewigpl} and incorporated into a family of modal systems $L3$--$L5$ originally introduced in \cite{lewjlc2}. The resulting epistemic logics are systems for the reasoning about intuitionistic truth, i.e. proof, and a kind of intuitionistic knowledge based on an informal notion of \textit{justification} (cf. \cite{lewapal}). In the present paper, we extend modal logic $L5$ with the purpose of formalizing a new concept of constructive knowledge which, in our multi-agent setting, relies on the intuition that agent $i$ knows $\varphi$ if $i$ has gained \textit{access to a proof} of $\varphi$. This paradigma also admits a concept of common knowledge that we adopt from \cite{lewsl} and interpret here constructively. The basic motivation behind this access-based concept is the idea that \textit{to know} $\varphi$, in a constructive sense, means something like \textit{to understand, to become aware of, to effect}, ... a proof of proposition $\varphi$ (or, if one prefers, a solution to problem $\varphi$), and that the agent possibly has to spent some effort and ressources to execute this activity. We feel that the intuition of \textit{finding access to} a proof of $\varphi$ captures those ideas in some abstract way. In the following, we shortly discuss the above mentioned concepts of intuitionistic knowledge found in the literature and then present the notion of access-based knowledge. In the subsequent sections, we shall see that all three concepts can be formalized and studied within a unifying framework of algebraic (and relational) semantics.

\subsection{Verification-based knowledge: Artemov and Protopopescu 2016}

Artemov and Protopopescu \cite{artpro} propose an intuitionistic concept of knowledge which is in accordance with the proof-reading semantics of intuitionistic propositional logic $\mathit{IPC}$, i.e. with well-known Brouwer-Heyting-Kolmogorov (BHK) interpretation. Knowledge is viewed as the product of a \textit{verification}. The intuitive notion of verification generalizes \textit{proof} as intuitionistic truth in the sense that a proof is `the strictest kind of a verification'. $K\varphi$ means that it is verified that proposition $\varphi$ holds intuitionistically, i.e. there is evidence that $\varphi$ has a proof (even if a concrete proof is not delivered nor specified in the process of verification). Under this interpretation, the following principles hold and represent an axiomatization of $\mathit{IEL}$ in the language of $\mathit{IPC}$ augmented with knowledge operator $K$:\\

\noindent (i) all schemes of theorems of $\mathit{IPC}$\\
(ii) $K(\varphi\rightarrow\psi)\rightarrow (K\varphi\rightarrow K\psi)$ (distribution of knowledge)\\
(iii) $\varphi\rightarrow K\varphi$ (co-reflection)\\
(iv) $K\varphi\rightarrow\neg\neg\varphi$ (intuitionistic reflection)\\

Note that (iv) reads `Known (i.e. verified) propositions cannot be proved to be false'. Since the process of verification, in general, does not deliver a concrete proof, the classical reflection principle (factivity of knowledge) $K\varphi\rightarrow\varphi$ is not valid. Modus Ponens is the unique inference rule of the resulting deductive system. It is shown in \cite{artpro} that $\mathit{IEL}$ is sound and complete w.r.t. a possible-worlds semantics. Although the notion of verification is only intuitively given in $\mathit{IEL}$, it is shown by Protopopescu \cite{pro} that also an arithmetical interpretation can be provided.  

\subsection{Adopting a justification-based view: Lewitzka 2017, 2019}

Logic $L5$ was introduced in \cite{lewjlc2} together with a hierarchy $L\subsetneq L3 \subsetneq L4 \subsetneq L5$ of classical Lewis-style modal logics for the reasoning about intuitionistic truth, i.e. proof. A formula $\square\varphi$ reads `$\varphi$ is proved (i.e. $\varphi$ has an actual proof)'. Semantics is given by a class of Heyting algebras where intuitionistic truth is represented by the top element of the Heyting lattice, and classical truth is modeled by a designated ultrafilter. Formulas of the form
\begin{equation}\label{20}
\square\varphi\leftrightarrow (\varphi\equiv\top)
\end{equation}
are theorems and express that $\square$ is a predicate for intuitionistic truth: $\square\varphi$ is classically true iff $\varphi$ holds intuitionistically (i.e. $\varphi$ denotes the top element of the underlying Heyting algebra). An essential feature is the definability of an identity connective by $\varphi\equiv\psi := \square(\varphi\leftrightarrow\psi)$ such that the identity axioms of R. Suszko's basic non-Fregean logic, the Sentential Calculus with Identity $\mathit{SCI}$ (cf. \cite{blosus}), are satisfied:\footnote{In $\mathit{SCI}$, the identity connective is a primitive symbol of the object language.}\\

\noindent (i) $\varphi\equiv\varphi$\\
(ii) $(\varphi\equiv\psi)\rightarrow (\varphi\leftrightarrow\psi)$\\
(iii) $\varphi\equiv\psi\rightarrow \chi [x:=\varphi]\equiv \chi[x:=\psi]$.\footnote{$\varphi[x:=\psi]$ is the result of substituting $\psi$ for any occurrence of variable $x$ in $\varphi$.}\\

$\varphi\equiv\psi$ reads `$\varphi$ and $\psi$ have the same meaning (denotation, \textit{Bedeutung})'. We refer to the axioms (i)--(iii) as the axioms of propositional identity, and particularly to (iii) as the Substitution Principle (SP). Since these axioms are theorems of our modal systems, we are dealing with specific classical non-Fregean logics which essentially means that the `Fregean Axiom' $(\varphi\leftrightarrow\psi)\rightarrow (\varphi\equiv\psi)$ does not hold, i.e. formulas with the same truth value may have different meanings. That is, the denotation of a formula is generally more than a truth value: it is a proposition, i.e. an element of a given model. Actually, all our logics extending $L5$ are specific non-Fregean theories with the property that for any formulas $\varphi,\psi$: $\varphi\equiv\psi$ is a theorem iff $\varphi\leftrightarrow\psi$ is intuitionistically valid, i.e. valid in standard BHK semantics extended by proof-interpretation clauses for additional operators. Thus, in any model, intuitionistically equivalent formulas denote the same proposition whereas formulas such as $\varphi$ and $\neg\neg\varphi$ have, in general, different meanings. This determines, in a sense, the `degree of intensionality' of our logics. The highest degree of this kind of intensionality is achieved in Suszko's $\mathit{SCI}$ where for all formulas $\varphi$, $\psi$ it holds that $\varphi\equiv\psi$ is a theorem iff $\varphi = \psi$. \\
A further feature of our modal systems is that they contain a copy of $\mathit{IPC}$ and thus combine $\mathit{IPC}$ with classical propositional logic $\mathit{CPC}$ in the following precise sense. If $\Phi\cup\{\varphi\}$ is a set of formulas in the propositional language of $\mathit{IPC}$, then 
\begin{equation}\label{25}
\Phi\vdash_{IPC}\varphi\Leftrightarrow\square\Phi\vdash_{L}\square\varphi,
\end{equation}
where $\square\Phi :=\{\square\psi\mid\psi\in\Phi\}$. In particular, for any propositional formula $\varphi$, $\varphi$ is a theorem of $\mathit{IPC}$ iff $\square\varphi$ is a theorem of system $L$. That is, $\varphi\mapsto\square\varphi$ is a `translation', actually an embedding, of $\mathit{IPC}$ into classical modal logic $\mathit{L}$, cf. \cite{lewjlc2} ($L$ can be replaced here with any member of the hierarchy $L\subseteq L3 \subseteq L4 \subseteq L5 \subseteq$ `epistemic extensions'). Obviously, this embedding of $\mathit{IPC}$ into classical modal systems is simpler than the well-known standard translation of $\mathit{IPC}$ into modal logic $S4$ due to G\"odel. We argued in \cite{lewapal} that the $S5$-style system $L5$ is an adequate system for reasoning about proof showing that it is complete w.r.t. extended BHK semantics, i.e. w.r.t. intuitionistic reasoning. This semi-formal result is formally confirmed by soundness and completeness of $L5$ w.r.t. a relational semantics based on intuitionistic general frames, cf. \cite{lewapal}. For this reason, we consider here $L5$ as the basis of our epistemic extensions. In \cite{lewigpl}, we extended $L5$ to the epistemic logic $EL5$ taking into account principles coming from $\mathit{IEL}$. $EL5$ is further studied in \cite{lewapal} where its algebraic semantics is complemented by relational semantics. $EL5$ can be axiomatized in the following way: \\

\noindent (INT) All formulas which have the form of an $\mathit{IPC}$-tautology\\
(i) $\square(\varphi\vee\psi)\rightarrow(\square\varphi\vee\square\psi)$\\
(ii) $\square\varphi\rightarrow\varphi$\\
(iii) $\square(\varphi\rightarrow\psi)\rightarrow(\square\varphi\rightarrow\square\psi)$ \\
(iv) $\square\varphi\rightarrow\square\square\varphi$\\
(v) $\neg\square\varphi\rightarrow\square\neg\square\varphi$\\
(vi) $K\varphi\rightarrow\neg\neg\varphi$ (intuitionistic reflection)\\
(vii) $K(\varphi\rightarrow\psi)\rightarrow (K\varphi\rightarrow K\psi)$\\
(viii) $\square\varphi\rightarrow \square K\varphi$ (weak co-reflection)\footnote{Replacing this scheme with $\square\varphi\rightarrow K\varphi$ results in a deductively equivalent system.} \\
(TND) $\varphi\vee\neg\varphi$ (\textit{tertium non datur})\\

The reference rules are Modus Ponens (MP) and Intuitionistic Axiom Necessitation (AN): `If $\varphi$ is an intuitionistically acceptable axiom, i.e. any axiom distinct from (TND), then infer $\square\varphi$.' Actually, we argued in \cite{lewapal} that all schemes (i)--(viii) above are intuitionistically acceptable, i.e. sound w.r.t. BHK semantics extended by constructive interpretations of the modal and epistemic operators, respectively. Logic $L5$ is axiomatized by (INT), (i)--(v) and (TND) along with the same inference rules where, again, (AN) applies to all axioms but (TND). 

Notice that in the more expressive modal language, we are able to weaken the original axiom of co-reflection from $\mathit{IEL}$. Of course, the resulting formalization of knowledge then no longer captures the notion of verification as axiomatized in $\mathit{IEL}$. Instead, we proposed in \cite{lewapal} to consider an informal notion of \textit{justification} or \textit{reason} to motivate the new formalization.\footnote{There is a family of sophisticated Justification Logics found in the literature (see, e.g., \cite{artfit} for an overview) where justifications along with operations on them are explicitly formalized. These aspects are not contained in logic $EL5$. Instead, the notion of justification is understood in a primitive and completely informal and unspecified way.} Accordingly, we suppose that $\varphi$ is known by the agent if he has an epistemic justification, reason for $\varphi$. What the agent recognizes or accepts as an epistemic justification depends essentially from its internal conditions, reasoning capabilities, convictions, etc. Contrary to the more objective and agent-invariant concept of verification, the notion of justification is agent-dependent. We postulate that the agent recognizes at least all \textit{actual proofs}, i.e. all effected constructions, as epistemic justifications. This ensures the validity of weak co-reflection (viii). However, a possible proof as a potential, non-effected construction is, in general, not accepted by the agent as a reason for his knowledge. Full co-reflection in its original form $\varphi\rightarrow K\varphi$ must be rejected under this justification-based view.

Of course, a justification does not constitute a proof: classical reflection (factivity of knowledge), $K\varphi\rightarrow\varphi$, must be rejected. Nevertheless, if the agent has an epistemic justification of proposition $\varphi$, then $\varphi$ cannot be proved to be false, i.e. $\neg\neg\varphi$ holds intuitionistically. Therefore, intuitionistic reflection (vi) from $\mathit{IEL}$ is adopted. We also assume that if the agent has justifications for $\varphi\rightarrow\psi$ and for $\varphi$, respectively, then he obtains a justification for $\psi$.\footnote{This is an established principle in Justification Logics with a precise formalization, cf. \cite{artfit}.} Thus, we adopt distribution of knowledge, axiom (vii) above, too.

\subsection{Access-based knowledge}

We propose here a concept of constructive knowledge which relies on the intuition that an agent knows a proposition $\varphi$ if he \textit{has found an access to a proof} of $\varphi$. In some specific context, `to find an access to a proof' may be interpreted as `to understand a proof', `to become aware of a proof', etc. We consider a multi-agent scenario based on the following ontological assumptions (see also \cite{lewapal}): 

We are given a \textit{universe of possible proofs}, i.e. a universe of potential constructions, mathematical possibilities. The \textit{creative subject}\footnote{This term was used by Brouwer and we adopt it here four our short, informal explanation.} establishes the intuitionistic truth of propositions by effecting constructions. These effected constructions are the \textit{actual proofs} among the possible proofs, i.e., the established intuitionistic truths. The universe of possible proofs exists objectively and can be explored by reasoning subjects.\footnote{Since we are reasoning about proof in classical logic, i.e. from a classical point of view, we adopt a platonist perspective which we combine with the constructive approach. Notice that the BHK interpretation of implication implicitly contains a universal quantification: `A proof of  $\varphi\rightarrow\psi$ consists in a construction $u$ such that \textit{for all proofs} $t$: if $t$ is a proof of $\varphi$, then $u(t)$ is a proof of $\psi$'. The range of that universal quantifier is the given universe of possible proofs.} A possible proof may be a hypothetical, potential construction, not necessarily effected by the creative subject. It can be conceived as a set of \textit{conditions} on a construction rather than the construction itself (cf. \cite{att1, att2}). We expect that these conditions are not in conflict with effected constructions, i.e. they are `consistent' with the actual proofs. There is a set $I=\{1,...,N\}$ of $N\ge 1$ agents distinct from the creative subject. Each agent can obtain knowledge by accessing possible proofs, where `accessing a proof' is a constructive procedure or activity that any agent is able to carry out, possibly by spending some effort and resources. A (possible) proof of the proposition ``agent $i$ knows $\varphi$" is given by a (possible) proof of $\varphi$ along with an access to that proof found by $i$. Actual proofs, i.e. the constructions effected by the creative subject, are immediately available and thus trivially accessible. That is, each agent's knowledge comprises at least intuitionistic truth established by the creative subject. Finally, there is a designated subset of possible proofs that determines the facts, i.e. the `classical truths'.

By the proof predicate on the object language, we may explicitly distinguish between actual proofs and non-effected, possible proofs. As before, $\square\varphi$ reads classically `$\varphi$ has an actual proof (i.e. $\varphi$ is proved)', and $\Diamond\varphi :=\neg\square\neg\varphi$ reads `$\varphi$ has a possible proof'.\footnote{Of course, $\square\varphi\rightarrow\Diamond\varphi$ is a theorem of the Lewis-style systems $L\subseteq L3\subseteq L4\subseteq L5$, cf. \cite{lewapal}.} In \cite{lewapal}, we extended standard BHK interpretation by the following clause for the modal operator:

\begin{itemize}
\item A proof of $\square\varphi$ consists in presenting an actual proof of $\varphi$.\footnote{We assume that the \textit{presentation} of an actual proof of $\varphi$ involves some proof-checking procedure which depends only from the given actual proof itself and from $\varphi$.} 
\end{itemize}

Since actual proofs are effected, available constructions, every agent $i\in I$ has the same immediate, trivial access to them. We denote this unique, trivial access by $s_0$. It might be regarded as an access created by the `empty action' (no effort must be spent). On the other hand, if some proof $t$ is accessed via $s_0$, then $t$ must be an actual proof. That is, we postulate the following:

\begin{itemize}
\item The proofs accessed via $s_0$ (by any agent) are exactly the actual proofs.
\end{itemize}

We establish the following proof-interpretation clause for the knowledge operator:

\begin{itemize}
\item A proof of $K_i\varphi$ is a tuple $(s,t)$, where $s$ is an access, found by agent $i$, to a proof $t$ of proposition $\varphi$.
\end{itemize} 

If $(s,t)$ is a proof of $K_i\varphi$, then we write also $(si,t)$ instead of $(s,t)$ in order to emphasize the involved agent. Note that for any $i\in I$ and any proposition $\varphi$, $(s_0,t)$ is a proof of $K_i\varphi$ iff $t$ is an actual proof of $\varphi$. Then the principles $\square K_i\varphi\rightarrow \square\varphi$ and $\square\varphi\rightarrow \square K_i\varphi$ are intuitionistically acceptable. In fact, given the presentation of an actual proof of $\square K_i\varphi$, that actual proof must be of the form $(s,t)$, where $s$ is an access to the actual proof $t$ of $\varphi$ (thus $s=s_0$ is the trivial access). The construction that maps $(s,t)$ to $t$ yields a proof of the former principle. This also shows that any actual proof of a formula $K_i\varphi$ is of the form $(s_0,t)$, where $t$ is an actual proof of $\varphi$. Now, one recognizes that the construction that for any actual proof $t$ of $\varphi$ returns the tuple $(s_0,t)$ gives rise to a proof of the latter principle. Consequently, $\square(\square K_i\varphi\rightarrow \square\varphi)$ and $\square(\square\varphi\rightarrow \square K_i\varphi)$ are sound w.r.t. extended BHK semantics, i.e. $\square K_i\varphi\equiv \square\varphi$. Of course, $K_i\varphi$ and $\varphi$ denote generally different propositions.

It is clear that from a (possible) proof $(si,t)$ of $K_i\varphi$, the (possible) proof $t$ of $\varphi$ can be extracted. This procedure yields a proof of $K_i\varphi\rightarrow\varphi$. Hence, classical reflection (factivity of knowledge) is intuitionistically acceptable. On the other hand, \textit{intuitionistic reflection} $\varphi\rightarrow K_i\varphi$, an axiom of verification-based knowledge, must be rejected (for similar reasons as it is rejected in the justification-based approach discussed above). In fact, given a (possible) proof $t$ of $\varphi$, we cannot expect that agent $i$ has gained any access to $t$, there is no logical evidence for such an access. The access-based approach validates the following disjunction property of knowledge: $K_i(\varphi\vee\psi)\rightarrow (K_i\varphi\vee K_i\psi)$. A BHK proof derives immediately from the clauses for $K_i$ and disjunction. We postulate the following two \textit{Combination Principles}:\\

(C1) If $s$ is an access to proof $t$, and $s'$ is an access to proof $u$, and $t$ is a construction converting $u$ into the proof $t(u)$, then any agent which has gained both accesses $s$ and $s'$ is able to create a combined access $s+s'$ to proof $t(u)$. We assume that $s_0 + s=s=s + s_0$, for any access $s$ and the trivial access $s_0$.\\

(C2) If $t$ is an access to proof $u$, and $s$ is an access to proof $(t,u)$, then a composed access $s\circ t$ to proof $u$ can be found. That is, if $(tj,u)$ is a proof and $(si,(tj,u))$ is a proof, then $((s\circ t)i, u)$ is a proof. We assume that $s\circ s = s$, for any access $s$.\\ 

(C1) warrants intuitionistic validity of $K_i(\varphi\rightarrow\psi)\rightarrow (K_i\varphi\rightarrow K_i\psi)$. A proof is given by the construction that for any proof $(s,t)$ of $K_i(\varphi\rightarrow\psi)$ returns the function mapping any proof $(s',u)$ of $K_i\varphi$ to the proof $(s+s',t(u))$ of $K_i\psi$. Principle (C2) warrants the following intuitive epistemic law: `If $i$ knows that $j$ knows $\varphi$, then $i$ knows $\varphi$'. That is, $K_i K_j\varphi\rightarrow K_i\varphi$ is intuitionistically acceptable.\\

As usual, the fact that everyone in group $G=\{i_1,...,i_k\}$ knows $\varphi$ is expressed by the formula $E_G\varphi := K_{i_1}\varphi\wedge ... \wedge K_{i_k}\varphi$. Recall that $E_G^n\varphi$ is recursively defined by $E_G^0\varphi :=\varphi$ and $E_G^{k+1}\varphi := E_G E^k_G\varphi$, for $k\ge 0$. Also recall that knowledge distributes over conjunction. The concept of `$\varphi$ is common knowledge among the agents of group $G$', notation: $C_G\varphi$, is often informally defined as follows:
\begin{equation}\label{50}
C_G\varphi \Leftrightarrow \bigwedge_{n\in\mathbb{N}} E_G^n\varphi 
\end{equation} 
That is, $C_G\varphi$ is true iff the infinitely many formulas $\varphi$, $E_G\varphi$, $E_G^2\varphi$, ... are true. However, standard formalizations found in the literature (see, e.g., \cite{fag, mey}) involve additionally properties that go beyond that basic intuition. In fact, standard possible worlds semantics of epistemic logic with common knowledge validates also the following introspection principle as a theorem of standard axiomatizations:
\begin{equation}\label{60}
C_G\varphi\rightarrow C_G C_G\varphi \text{ (introspection of common knowledge)}
\end{equation} 
But if we take \eqref{50} seriously and understand common knowledge as such an infinite conjunction, then principle \eqref{60} does not necessarily follow. Of course, in many `natural' situations, such as the popular example of the \textit{moody children} (cf. \cite{fag, mey}), common knowledge arises at once after some finite amount of communication steps, and one may regard \eqref{60} as an evident principle in those cases. However, one may construct examples where common knowledge is actually attained in an infinite process of communication steps. In \cite{fag}, p. 416, for instance, an unrealistic version of the well-known coordinated-attack problem is discussed. If the messenger between the two generals is able to double his speed every time around, and his first journey takes one hour, then it follows that after exactly two hours he has visited both camps an infinite number of times delivering each time the message ``attack at down" sent from the other general, and the generals will finally be able to carry out a coordinated attack because they have attained common knowledge. We may state that after the two hours of infinitely many journeys, each of the two generals knows that $E_G^n\varphi$, for every natural $n$ (where $\varphi$ is the delivered message). However, we cannot conclude that the generals do know the infinite collection of facts $\{E_G^n\varphi\mid n\in\mathbb{N}\}$ as a single proposition $\bigwedge_{n\in\mathbb{N}} E_G^n\varphi$. In fact, new knowledge is attained after each finite number of communication steps between the two agents, but there is no further communication beyond the limit step. This example shows that if $C_G\varphi$ is attained (possibly by an infinite number of steps), we cannot expect in general that also $K_i C_G\varphi$ holds for $i\in G$. Thus, principle \eqref{60} is not valid. However, under the assumption that in all known natural situations where common knowledge arises, it arises in a similar way as in the example of \textit{the moody children}, we may accept \eqref{60} as an additional axiom. Since our modeling deviates from the possible worlds approach, we are able to treat both versions of common knowledge: the basic one which is given by an infinite conjunction in the form of \eqref {50}, and the stronger version which extends the basic version by the introspection principle \eqref{60}. The axiomatization and semantic modeling of the basic version of common knowledge is adopted from \cite{lewsl} where it was originally developed in a general, classical non-Fregean setting. We add here principle \eqref{60} and provide a constructive, access-based interpretation which proves to be sound w.r.t. our extended BHK semantics. We are not able to represent the infinite conjunction of \eqref{50} in our object language by a fixed-point axiom or similar solutions working in standard possible worlds semantics. Instead, we propose a semantic characterization by means of \textit{intended models}, a solution that we shall discuss in some detail in the last section.   

\begin{definition}\label{90}
Let $G$ be a group and let $t$ be a (possible) proof. We call an access $s$ to $t$ a common access in $G$, or a $G$-common access, if the following hold:\\
\noindent (a) all agents of $G$ have gained the same access $s$ to $t$\\
(b) $s$ is self-referential in $G$, i.e. for any $i,j\in G$ and any proof $u$, if $(si,u)$ is a proof, then so is $(sj,(si,u))$.
\end{definition}

The next result shows that the particular choice of proof $t$ in Definition \ref{90} is not relevant.

\begin{lemma}\label{91}
Let $s$ be a $G$-common access to $t$. If some $i\in G$ has access $s$ to some proof $u$, then $s$ is also a $G$-common access to proof $u$.
\end{lemma}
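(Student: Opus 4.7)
The plan is to verify both clauses (a) and (b) of Definition~\ref{90} for the access $s$ to the proof $u$.

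First, I would observe that clause (b), self-referentiality of $s$ in $G$, is intrinsic to the access $s$ itself: its statement is universally quantified over all proofs and does not refer to the specific proof $t$. Since $s$ is by hypothesis already a $G$-common access to $t$, it satisfies (b), and this clause carries over verbatim to the pair $(s,u)$ without any further work.

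The substantive part is clause (a): I need to show that for every $j\in G$, the pair $(sj,u)$ is a proof. By hypothesis, $(si,u)$ is a proof for some fixed $i\in G$. Applying self-referentiality of $s$ to this proof and to the pair of agents $(i,j)$ yields that $(sj,(si,u))$ is a proof for every $j\in G$. I would then invoke Combination Principle~(C2), taking the inner access to be $s$ (viewed as $i$'s access to $u$) and the outer access also to be $s$ (viewed as $j$'s access to $(si,u)$). The conclusion of (C2) is that $((s\circ s)j,u)$ is a proof, and the postulated idempotence $s\circ s = s$ then collapses this to $(sj,u)$, which is exactly what is needed.

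The main conceptual obstacle is recognizing that (C2) together with its idempotence clause $s\circ s = s$ is the right tool here: self-referentiality on its own only delivers an access of $j$ to the packaged proof $(si,u)$, and one needs (C2) to strip off the outer wrapper and obtain $j$'s direct access to $u$. Once this is spotted, both clauses of Definition~\ref{90} fall into place and no further case analysis is required.
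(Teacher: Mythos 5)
Your proof is correct and follows essentially the same route as the paper's: establish $(sj,(si,u))$ via self-referentiality, then collapse to $(sj,u)$ using (C2) and the idempotence $s\circ s=s$. Your explicit remark that clause (b) is a property of $s$ alone and thus carries over automatically is left implicit in the paper, but the argument is the same.
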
 

\begin{proof}
If $i\in G$ has the access $s$ to proof $u$, then $(si,u)$ is a proof. Since $s$ is a $G$-common access, item (b) of Definition \ref{90} implies that $(sj,(si,u))$ is a proof, for any $j\in G$. By composition principle (C2) above, $((s\circ s)j,u)$ is a proof for any $j\in G$. Also by (C2), $s\circ s = s$. Thus, $(sj,u)$ is a proof, for all $j\in G$. That is, $s$ is a $G$-common access to $u$.
\end{proof}

\begin{lemma}\label{96}
For any group $G$, the trivial access $s_0$ is a $G$-common access (to any actual proof).
\end{lemma}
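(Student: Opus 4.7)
The plan is to unpack Definition \ref{90} for the special access $s_0$ and appeal to the two postulates governing $s_0$: namely, that every agent has $s_0$ as an access to every actual proof, and, conversely, that any proof accessed via $s_0$ is necessarily an actual proof. Fix an arbitrary actual proof $t$; I will verify items (a) and (b) separately.

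For item (a), I would invoke directly the stipulation that ``the proofs accessed via $s_0$ (by any agent) are exactly the actual proofs.'' Since $t$ is actual, every agent $i\in G$ has the very same access $s_0$ to $t$, so clause (a) is immediate.

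For item (b), the key observation is that if $(s_0 i, u)$ is a proof for some $i\in G$, then by the characterization of $s_0$ the inner proof $u$ is itself an actual proof. From the earlier discussion in the paper, any actual proof of a formula of the form $K_i\varphi$ has the shape $(s_0,v)$ with $v$ an actual proof of $\varphi$; hence the tuple $(s_0 i, u)$ is itself an actual proof (of the proposition that agent $i$ knows whatever $u$ proves). Therefore, for every $j\in G$, the trivial access $s_0$ gives $j$ an access to this actual proof $(s_0 i, u)$, i.e.\ $(s_0 j,(s_0 i, u))$ is a proof, which is precisely self-referentiality for $s_0$ in $G$.

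I do not expect any real obstacle here; the statement is essentially a consistency check that $s_0$ behaves as described. The only subtlety worth flagging is the step that upgrades ``$(s_0 i, u)$ is a proof'' to ``$(s_0 i, u)$ is an \emph{actual} proof,'' which relies on combining the two postulates about $s_0$ with the structural description of actual proofs of knowledge formulas given just before Definition \ref{90}.
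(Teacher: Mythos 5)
Your proof is correct and follows essentially the same route as the paper's: clause (a) comes directly from the stipulation that the proofs accessed via $s_0$ are exactly the actual proofs, and clause (b) by upgrading $(s_0 i,u)$ to an \emph{actual} proof and then noting it is therefore accessible via $s_0$ by every $j\in G$. The only cosmetic point is that the upgrade step really uses the direction ``if $u$ is an actual proof then $(s_0,u)$ is an actual proof of the corresponding knowledge claim'' (the soundness of $\square\varphi\rightarrow\square K_i\varphi$) rather than the converse shape-description you cite, but both directions are established in the discussion preceding Definition~\ref{90}, so the argument stands.
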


\begin{proof}
Recall that the proofs accessed via $s_0$ (by any agent) are exactly the actual proofs. Thus, all agents have access $s_0$ to any actual proof. If $(s_0,t)$ is a proof, for some proof $t$, then $t$ and $(s_0,t)$ must be actual proofs. Thus, $(s_0,t)$ can be accessed via $s_0$ (by any agent). By Definition \ref{90}, $s_0$ is a $G$-common access, for any $G$.
\end{proof} 

A proof-interpretation clause for $C_G\varphi$ must take into account the respective version of common knowledge. Let us first consider the basic version of common knowledge given by the infinite conjunction expressed in \eqref{50} above. We consider two proposals:

\begin{itemize}
\item A `proof' of $C_G\varphi$ consists in an infinite sequence of proofs $(t_n)_{n\in\mathbb{N}}$ such that $t_n$ is a proof of $E^n_G\varphi$. 
\item A `proof' of $C_G\varphi$ consists in a proof $t$ of $\varphi$ together with a construction that for a given proof of $E^n_G\varphi$, $n\ge 0$, returns a proof of $E_G E^n_G\varphi = E^{n+1}_G\varphi$. 
\end{itemize}

Unfortunately, both clauses are problematic from a constructivist point of view. The first one describes a proof as an infinite object. The second one gives an inductive definition of a construction that possibly needs an infinite amount of time to produce all the different proofs of the infinitely many formulas $E^n_G\varphi$, $n\ge 0$. It seems that any approach to the basic intuition \eqref{50} of common knowledge (without introspection) involves some form of infinity that makes a constructive treatment hard or impossible. Therefore, we will focus on the stronger, introspective version of common knowledge which can be constructively described by the following simple and finitary clause: 

\begin{itemize}
\item A proof of $C_G\varphi$ is a tuple $(s,t)$, where $t$ is a proof of $\varphi$ and $s$ is a $G$-common access to $t$.
\end{itemize}

\begin{example}\label{92} 
We consider the introspective version of common knowledge. Imagine a math lecture. The lecturer writes a proof of a theorem $\varphi$ on the blackboard. It is clear that at the end of the lecture, there is common knowledge of $\varphi$ in the group $G$ of students who listened the lecture. We interpret the situation constructively in the following way. Let $s$ be the lecture and let $t$ be the proof of $\varphi$ written on the blackboard. Then all students of group $G$ share the same access $s$ to $t$. Hence, condition (a) of Definition \ref{90} is satisfied. During the lecture, the students can see each other listening the lecture. Thus, every student $j\in G$ has access via $s$ to the proof $(si,t)$ of $K_i\varphi$, for any $i\in G$. This yields proofs $(sj,(si,t))$ of $K_j K_i\varphi$, for any $j,i\in G$, and so on ... . Of course, the same arguments apply to any other statement $\psi$ with proof $u$ presented in lecture $s$. Then $s$ is self-referential in $G$ in the sense of Definition \ref{90}, i.e. condition (b) holds true. Thus, $s$ is a $G$-common access to $t$, and $(s,t)$ is a proof of $C_G\varphi$ in the sense of the clause for $C_G\varphi$ above.
\end{example}

Next, we present some principles of common knowledge which are sound w.r.t. extended BHK interpretation.\\
$\square\varphi\rightarrow \square C_G\varphi$. Every agent has the trivial access $s_0$ to an actual proof $t$ of $\varphi$. We already saw that $s_0$ is self-referential in the group of all agents $I$. Consequently, the function that maps any actual proof $t$ of $\varphi$ to the actual proof $(s_0,t)$ of $C_G\varphi$ gives rise to an actual proof of $\square\varphi\rightarrow \square C_G\varphi$.\\ 
$C_G\varphi\rightarrow C_G K_i\varphi$, $i\in G$. Suppose $(s,t)$ is a proof of $C_G\varphi$. Then, in particular, $(si,t)$ is a proof of $K_i\varphi$. Since $s$ is self-referential in $G$, $(sj,(si,t))$ is a proof, for every $j\in G$. Thus, $(s,(si,t))$ is a proof of $C_G K_i\varphi$. Then the mapping $(s,t)\mapsto (s,(si,t))$ is an effected construction, i.e. actual proof, for $C_G\varphi\rightarrow C_G K_i\varphi$. \\
$C_G\varphi\rightarrow C_G C_G\varphi$. Let $(s,t)$ be a proof of $C_G\varphi$. Then $s$ is a $G$-common access to proof $t$ of $\varphi$. In particular, $s$ is self-referential in $G$. Thus, for some (for any) $j\in G$, $(sj,(s,t))$ is a proof. Then by Lemma \ref{91}, $s$ is a $G$-common access to proof $(s,t)$. By definition, $(s (s,t))$ then is a proof of $C_G C_G\varphi$. Thus, the mapping $(s,t)\mapsto (s (s,t))$ represents an actual proof of $C_G\varphi\rightarrow C_G C_G\varphi$.\\
$\square(\varphi\rightarrow\psi)\rightarrow \square (C_G\varphi\rightarrow C_G\psi)$. Let $t$ be an actual proof of $\varphi\rightarrow\psi$. Let $(s,u)$ be a proof of $C_G\varphi$. Then $t$ converts $u$ into a proof $t(u)$ of $\psi$. Each $i\in G$ has the trivial access $s_0$ to $t$, since $t$ is an actual proof. And each $i\in G$ has the access $s$ to proof $u$. By combination principle (C1), each $i\in G$ gains the access $s_0+s = s$ to proof $t(u)$. By Lemma \ref{91}, $s$ then is also a $G$-common access to $t(u)$. Thus, $(s,t(u))$ is a proof of $C_G\psi$. Of course, the function $f_t\colon (s,u)\mapsto (s,t(u))$ is an effected construction, i.e. an actual proof. Then the construction that for any actual proof $t$ of $\varphi\rightarrow\psi$ returns a presentation (including proof-checking) of function $f_t$, constitutes an actual proof of $\square(\varphi\rightarrow\psi)\rightarrow \square (C_G\varphi\rightarrow C_G\psi)$.\\
 Finally, we show that $K_i\varphi$ and $C_G\varphi$ have exactly the same actual proofs, independently of $i$ and $G$.  In fact, $(s,t)$ is an actual proof of $K_i\varphi$ iff $t$ is an actual proof of $\varphi$ and $s=s_0$ iff $t$ is an actual proof of $\varphi$ and the trivial access $s=s_0$ is a $G$-common access to $t$ iff $(s,t)$ is an actual proof of $C_G\varphi$. This shows in particular that the \textit{actual} proofs (not \textit{all} possible proofs) of $\varphi$, $K_i\varphi$ and $C_G\varphi$, respectively, can be converted into each other, i.e. $\square\varphi\equiv\square K_i\varphi\equiv \square C_G\varphi$ holds for all $i\in I$ and all groups $G$.\footnote{Cf. Lemma \ref{510}(b) below.} However, $\varphi$, $K_i\varphi$ and $C_G\varphi$  will denote, in general, pairwise distinct propositions.

\section{The logics of access-based knowledge $L5^{AC^-}_N$ and $L5^{AC}_N$}

We extend, in the following, system $L5$ by axioms for knowledge and common knowledge in an augmented epistemic object language. As before, $I=\{1,...,N\}$ is a fixed finite set of $N\ge 1$ agents, and groups of agents are always non-empty subsets $G\subseteq I$. 

\begin{definition}\label{100}
The object language is defined over the following set of symbols: an infinite set of propositional variables $V=\{x_0, x_1, ... \}$, logical connectives $\bot$, $\neg$, $\vee$, $\wedge$, $\rightarrow$, modal operator $\square$ and epistemic operators $K_i$, for $i\in I$, and $C_G$, for every group $G$ of agents. Then the set of formulas $Fm$ is the smallest set that contains $V\cup\{\bot\}$ and is closed under the following conditions: $\varphi,\psi\in Fm$ $\Rightarrow$ $\neg\varphi$, $(\varphi * \psi)$, $\square\varphi$, $K_i\varphi$, $C_G\varphi \in Fm$, where $*\in\{\vee,\wedge,\rightarrow\}$, $i\in I$, $G\subseteq I$, $G\neq\varnothing$.
\end{definition}

We use the following abbreviations: $\top:=(\bot\rightarrow\bot)$, $\neg\varphi:=(\varphi\rightarrow\bot)$, $(\varphi\leftrightarrow\psi):=(\varphi\rightarrow\psi)\wedge (\psi\rightarrow\varphi)$, $\varphi\equiv\psi := \square(\varphi\leftrightarrow\psi)$ (propositional identity), $\Diamond\varphi := \neg\square\neg\varphi$.\\

We consider the following axiom schemes:\\ 

\noindent (INT) any scheme which has the form of an $\mathit{IPC}$-tautology\footnote{It would be sufficient to fix here a finite set of schemes that axiomatize $\mathit{IPC}$.}\\
(i) $\square(\varphi\vee\psi)\rightarrow(\square\varphi\vee\square\psi)$\\
(ii) $\square\varphi\rightarrow\varphi$\\
(iii) $\square(\varphi\rightarrow\psi)\rightarrow(\square\varphi\rightarrow\square\psi)$ \\
(iv) $\square\varphi\rightarrow\square\square\varphi$\\
(v) $\neg\square\varphi\rightarrow\square\neg\square\varphi$\\
(vi) $K_i\varphi\rightarrow \varphi$ (reflection, factivity of knowledge)\\
(vii) $K_i(\varphi\rightarrow\psi)\rightarrow (K_i\varphi\rightarrow K_i\psi)$\\
(viii) $K_i(\varphi\vee\psi)\rightarrow (K_i\varphi\vee K_i\psi)$\\
(ix) $C_G(\varphi\rightarrow\psi)\rightarrow (C_G\varphi\rightarrow C_G\psi)$\\
(x) $C_G(\varphi\vee\psi)\rightarrow (C_G\varphi\vee C_G\psi)$ (only for introspective common knowledge)\\
(xi) $\square\varphi\rightarrow \square C_G\varphi$\\
(xii) $C_G\varphi\rightarrow K_i\varphi$, for any $i\in G$\\
(xiii) $C_G\varphi\rightarrow C_G K_i\varphi$, for any $i\in G$\\
(xiv) $C_G\varphi\rightarrow C_{G'}\varphi$, for any non-empty $G' \subseteq G$\\
(xv) $C_G\varphi\rightarrow C_G C_G\varphi$ (for introspective common knowledge)\\
(TND) $\varphi\vee\neg\varphi$\\

Except of (TND), all schemes above are intuitionistically acceptable in the sense that they are sound w.r.t. extended BHK semantics considering the access-based interpretation of epistemic operators. For most of the epistemic axioms, this is shown in the last section. In \cite{lewapal}, we saw that the modal axioms, in particular (iv) and (v), are sound w.r.t. extended BHK semantics. For the convenience of the reader, we recall here the argumentation. Before, we show that
\begin{equation}\label{120}
\square\varphi\vee\neg\square\varphi
\end{equation}
\noindent is intuitionistically acceptable.\footnote{Actually, $\square(\square\varphi\vee\neg\square\varphi)$ is a theorem of $L5$, cf. Theorem 3.7(vii) in \cite{lewapal}.} Of course, either there is an actual proof of $\varphi$ or there is no such proof. Since an actual proof is immediately available, it can be decided which one of the two alternatives is the case. In the former case, that actual proof is available and can be presented (proof-checked). This yields an actual proof of $\square\varphi$. In the latter case, we conclude that $\square\varphi$ has no possible proof at all. In fact, any (possible) proof of $\square\varphi$ would, by the BHK clause, involve an actual proof of $\varphi$ which, by hypothesis, does not exist. Thus, the identity function on proofs, as an effected construction, constitutes an actual proof of $\square\varphi\rightarrow\bot$, i.e. of $\neg\square\varphi$. We have shown that for any proposition $\varphi$, either we can present an actual proof of $\square\varphi$ or we can present an actual proof of $\neg\square\varphi$, and we are able to indicate which one of the two alternatives is the case. Thus, \eqref{120} is intuitionistically valid.\\

Soundness of (iv) $\square\varphi\rightarrow\square\square\varphi$. Suppose we are given a proof $s$ of $\square\varphi$. By definition, $s$ consists in the presentation of an \textit{actual} proof $t$ of $\varphi$. The presentation (including proof-checking) depends only from the actual proof $t$ and from $\varphi$ and no further hypotheses. Thus, $s$  is itself an effected construction, an actual proof. The presentation of $s$ as an actual proof of $\square\varphi$ yields an actual proof $u$ of $\square\square\varphi$. Thus, the construction that converts $s$ into $u$ is an actual proof of $\square\varphi\rightarrow\square\square\varphi$.\footnote{This shows in particular that any possible proof of $\square\varphi$ must be an actual proof of $\square\varphi$ which is in accordance with the fact that $\Diamond\square\varphi\rightarrow\square\square\varphi$ is a theorem of $L5$, cf. Theorem 3.7(v) in \cite{lewapal}.}\\

Soundness of (v) $\neg\square\varphi\rightarrow\square\neg\square\varphi$. Suppose $s$ is a proof of $\neg\square\varphi$. Then $\neg\square\varphi$ (i.e. $\square\varphi\rightarrow\bot$) must have an actual proof for otherwise, by \eqref{120} above, $\square\varphi$ would have an actual proof contradicting that $\neg\square\varphi$ has proof $s$. But then we may present a witness of an actual proof of $\square\varphi\rightarrow\bot$, namely the identity function on proofs which is, trivially, an effected construction. Its presentation (including proof-checking) results in an actual proof $t$ of $\square\neg\square\varphi$. We have presented a construction that for any possible proof $s$ of $\neg\square\varphi$ returns a proof $t$ of $\square\neg\square\varphi$.\\

Recall that our basic logic for the reasoning about proof $L5$ is given by the axiom schemes (INT), (i)--(v) and (TND) plus the inference rules of Modus Ponens (MP) and Intuitionistic Axiom Necessitation (AN): `If $\varphi$ is an intuitionistically acceptable axiom, i.e. any axiom distinct from (TND), then infer $\square\varphi$.' 
We define $L5^{AC}_N$ as the multi-agent logic of access-based knowledge and introspective common knowledge with $N\ge 1$ agents.\footnote{Letter `A' refers to `access-based knowledge' while `C' stands for `common knowledge'.} $L5^{AC}_N$ is given by $L5$ + (vi)--(xv). That is, $L5^{AC}_N$ is axiomatized by the complete list of axioms above along with the rules of (MP) and (AN). The logic $L5^{AC^-}_N$ is given in the same way as $L5^{AC}_N$ but without the schemes (x) and (xv). $L5^{AC^-}_N$ is intended to formalize access-based common knowledge as an infinite conjunction according to \eqref{50} without introspection. Obviously, both $L5^{AC}_N$ and $L5^{AC^-}_N$ are super-logics of $L5$. As usual, we define a derivation of $\varphi$ from a set $\varPhi$ as a finite sequence of formulas $\varphi_0,...,\varphi_n=\varphi$ such that each member of the sequence is an axiom, an element of $\varPhi$ or the result of an application of the rules of (MP) or (AN) to formulas occurring at preceding positions. Recall that (AN) only applies to axioms of the underlying system that are different from \textit{tertium non datur}. 

\begin{lemma}\label{500}
For any formulas $\varphi$, $\psi$, the following hold in all systems extending $L5$:\\
\noindent (a) If $\varphi$ is a theorem derivable without (TND), then $\square\varphi$ is a theorem.\\
(b) The Deduction Theorem holds.\\
(c) The Substitution Principle (SP) holds: $\varphi\equiv\psi\rightarrow \chi [x:=\varphi]\equiv \chi[x:=\psi]$.\\
The following are theorems:\\
(d) $\square\varphi\leftrightarrow (\varphi\equiv \top)$ and $\square\varphi\equiv(\varphi\equiv \top)$\\
(e) $\square (\varphi\wedge\psi)\equiv (\square\varphi\wedge\square\psi)$ and $\square (\varphi\vee\psi)\equiv (\square\varphi\vee\square\psi)$ \\
(f) $\square(\square\varphi\vee\neg\square\varphi)$\\ 
(g) $\neg\neg\square\varphi\equiv\square\varphi$ and $\neg(\square\varphi\wedge\square\psi)\equiv (\neg\square\varphi\vee\neg\square\psi)$\\
(h) $(\square\varphi\equiv\top)\vee (\square\varphi\equiv\bot)$\\
(i) $\square(\varphi\rightarrow\Diamond\varphi)$ and $\square(\Diamond\varphi\rightarrow\square\Diamond\varphi)$\\
(j) $\square(\Diamond (\varphi\vee\psi)\rightarrow (\Diamond\varphi\vee\Diamond\psi))$
\end{lemma}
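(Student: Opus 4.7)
Items (a)--(c) are standard scaffolding, to be dispatched first. For (a), induct on the length of the (TND)-free derivation: a non-(TND) axiom is handled by a single application of (AN); a Modus Ponens step combines the two inductive hypotheses via axiom (iii); (AN)-steps are trivial. The Deduction Theorem (b) goes through by the usual induction and does not break, precisely because (AN) is restricted to axioms of the system and is never applied to derived formulas or hypotheses. (c) is by induction on the structure of $\chi$, invoking at each step the identity axioms that are already available in these systems.

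The packaging items (d) and (e) then reduce to routine manipulation. For (d), the $\mathit{IPC}$-tautology $\varphi\leftrightarrow(\varphi\leftrightarrow\top)$ is necessitated by (a) and pushed through by distribution (iii), giving $\square\varphi\leftrightarrow\square(\varphi\leftrightarrow\top)$, i.e.\ $\square\varphi\leftrightarrow(\varphi\equiv\top)$; a second application of (a) lifts this biconditional to the $\equiv$-form. For (e), (a) and (iii) take care of the $\wedge$-case; for $\vee$, axiom (i) supplies the non-trivial direction while the converse comes from (a) on $\varphi\rightarrow\varphi\vee\psi$ and $\psi\rightarrow\varphi\vee\psi$, and one more use of (a) packages the resulting biconditionals as $\equiv$.

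The cluster (f)--(h) establishes the classical behaviour of $\square$-formulas. For (f), start from (TND) in the form $\square\varphi\vee\neg\square\varphi$ and case-split: (iv) supplies $\square\square\varphi$ in the left case and (v) supplies $\square\neg\square\varphi$ in the right; necessitating the trivial $\mathit{IPC}$-tautologies $\square\varphi\rightarrow(\square\varphi\vee\neg\square\varphi)$ and $\neg\square\varphi\rightarrow(\square\varphi\vee\neg\square\varphi)$ and applying (iii) closes both cases. For (g), the implication $\neg\neg\square\varphi\rightarrow\square\varphi$ is not itself (TND)-free, but the conditional $(\square\varphi\vee\neg\square\varphi)\rightarrow(\neg\neg\square\varphi\rightarrow\square\varphi)$ \emph{is} an $\mathit{IPC}$-tautology, so (a), (iii) and the theorem (f) jointly produce $\square(\neg\neg\square\varphi\rightarrow\square\varphi)$; combined with the trivially necessitated converse and with (e), this yields $\neg\neg\square\varphi\equiv\square\varphi$, and the De Morgan clause in (g) is proved analogously. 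Item (h) is then a brief consequence of (d), (TND), and the modal axioms (iv) and (v): in the $\square\varphi$ case (iv) and (d) give $\square\varphi\equiv\top$, and in the $\neg\square\varphi$ case (v) and packaging via (a) give $\square\varphi\equiv\bot$.

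Finally, for (i) the first conjunct rests on the observation that $\varphi\rightarrow\neg\square\neg\varphi$ is derivable from axiom (ii) at $\neg\varphi$ by pure $\mathit{IPC}$, so (a) yields $\square(\varphi\rightarrow\Diamond\varphi)$; the second conjunct is axiom (v) at $\neg\varphi$, so (AN) applies directly. For (j), the interior formula $\Diamond(\varphi\vee\psi)\rightarrow(\Diamond\varphi\vee\Diamond\psi)$ lacks a transparent (TND)-free derivation, so I would go around: derive the classical contrapositive $(\square\neg\varphi\wedge\square\neg\psi)\rightarrow\square\neg(\varphi\vee\psi)$ in a (TND)-free way via (a), (e) and (iii), take its $\mathit{IPC}$-contrapositive $\Diamond(\varphi\vee\psi)\rightarrow\neg(\square\neg\varphi\wedge\square\neg\psi)$, apply (a), and finally use the De Morgan part of (g) together with the Substitution Principle (c) to replace $\neg(\square\neg\varphi\wedge\square\neg\psi)$ by $\Diamond\varphi\vee\Diamond\psi$ inside the resulting $\square$-formula. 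The main obstacle throughout is (g) and the items that depend on it: one must engineer (TND)-free derivations to feed into (a), typically by necessitating a conditional $\mathit{IPC}$-tautology whose antecedent is the $\square$-packaged (TND) supplied by (f), or, as in (j), by lifting a classical equivalence inside a $\square$-formula through the Substitution Principle rather than by an intuitionistic derivation.
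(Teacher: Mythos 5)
Your plan is correct and follows the same overall architecture as the paper's proof: (a)--(c) by induction, (d)--(e) by (AN) plus distribution plus item (a), and (f)--(j) by exploiting the classical behaviour of boxed formulas. The local tactics differ in a few places, mostly to your advantage. For (f) the paper simply cites an external reference, whereas you give a self-contained derivation from (TND), (iv) and (v); this is fine, but note that your derivation uses (TND), so you obtain (f) only as a theorem, not as a (TND)-free one. That matters because the paper's proof of (g) asserts that $\neg\neg\square\varphi\rightarrow\square\varphi$ ``derives without (TND)'' from (f), which tacitly requires the external, (TND)-free derivation of (f); your alternative --- necessitating the $\mathit{IPC}$-tautology $(\square\varphi\vee\neg\square\varphi)\rightarrow(\neg\neg\square\varphi\rightarrow\square\varphi)$, distributing with (iii), and discharging the antecedent using (f) as a mere theorem --- sidesteps that dependency and is the more robust route given your version of (f). Your (h) case-splits on (TND) directly where the paper goes through (f) and axiom (i); your (j) applies (a) before the De Morgan rewriting and then substitutes inside the box via (SP), where the paper rewrites first (TND-free) and boxes last; both are equivalent in substance, and your use of (d) in the first branch of (h) should be supplemented by (AN) on $\bot\rightarrow\square\varphi$ and the packaging direction of (e) in the second branch. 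The one place you are too quick is (c): for the epistemic operators there are no ``identity axioms already available'' --- the congruence steps $(\varphi\equiv\psi)\rightarrow(K_i\varphi\equiv K_i\psi)$ and $(\varphi\equiv\psi)\rightarrow(C_G\varphi\equiv C_G\psi)$ must themselves be derived from the distribution axioms (vii), (ix) together with the bridge axioms (xi), (xii) and items (a) and (e); this is precisely the part of (c) that the paper spells out, and your induction on $\chi$ cannot close the cases for the new operators without it.
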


\begin{proof}
(a) and (b) can be shown by induction on the length of derivations.\\ 
(c): Roughly speaking, it is enough to show that propositional identity is a congruence relation on $Fm$. (SP) then follows by induction on $\chi$. This is shown for the logical connectives, the modal operator and the knowledge operator in \cite{lewjlc1, lewjlc2, lewigpl}. We consider here only the new operator of common knowledge. We must show that $(\varphi\equiv\psi)\rightarrow (C_G\varphi\equiv C_G\psi)$ is a theorem scheme. By axioms (xi), (ix), (ii) and propositional calculus, we get $\square(\varphi\leftrightarrow\psi)\rightarrow (C_G\varphi\leftrightarrow C_G\psi)$. By item (a), distribution and axiom (ii), we obtain the assertion.\\   
(d): The first part of (d) is originally shown in \cite{lewjlc1} for sublogic $L$. We present here a simpler derivation: 1. $(\varphi\equiv\top) \vdash \Box(\top \rightarrow \varphi)$; 2. $(\varphi\equiv\top) \vdash \Box\top \rightarrow \Box\varphi$, by distribution and (MP); 3. $(\varphi\equiv\top) \vdash \Box\top$, by (AN); 4. $(\varphi\equiv\top) \vdash \Box\varphi$, by (MP); 5. $\vdash (\varphi\equiv\top)\rightarrow \Box\varphi$, by Deduction Theorem; 6. $\vdash\square (\varphi\rightarrow (\top\rightarrow\varphi))$, by (AN); 7. $ \vdash\square\varphi\rightarrow\square (\top\rightarrow\varphi)$, by distribution and (MP); 8. $\vdash\square (\varphi\rightarrow (\varphi\rightarrow\top))$, by (AN); 9. $\vdash\square\varphi\rightarrow\square (\varphi\rightarrow\top)$, by distribution and (MP); 10. $\vdash\square\varphi\rightarrow \varphi\equiv\top$, by 7. and 9.; 12. $\vdash\square\varphi\leftrightarrow \varphi\equiv \top$, by 5. and 9. This shows the first part of (d). The second part now follows by item (a). \\
(e): Consider the intuitionistic tautologies $(\varphi\wedge\psi)\rightarrow\varphi$ and $(\varphi\wedge\psi)\rightarrow\psi$, apply rule (AN), distribution, intuitionistic propositional calculus. The other way round, consider the intuitionistic tautology $\varphi\rightarrow (\psi\rightarrow (\varphi\wedge\psi))$, apply (AN), distribution and intuitionistic propositional calculus. Finally, apply item (a). The second equation follows similarly using propositional calculus and axiom (i).\\
(f): This result is originally proved in \cite{lewapal}, Theorem 3.7(vii). \\
(g): Use (f), i.e. $\square\varphi\vee\neg\square\varphi$, and propositional calculus. Actually, by (a), it is enough to show that  $\neg\neg\square\varphi\rightarrow\square\varphi$ and $\neg(\square\varphi\wedge\square\psi)\rightarrow (\neg\square\varphi\vee\neg\square\psi)$ derive without (TND).\\
(h): Using (f) and axiom (i), one derives $\square\square\varphi\vee \square\neg\square\varphi$. Then (d) along with propositional caluclus yields $(\square\varphi\equiv\top)\vee (\square\varphi\equiv\bot)$.\\ 
(i): From $\varphi\rightarrow\neg\neg\varphi$ and the contraposition of theorem $\square\neg\varphi\rightarrow\neg\varphi$ we derive $\varphi\rightarrow\Diamond\varphi$ without using (TND). Now, apply item (a). The second assertion is clear by scheme (v) and item (a).  \\
(j): By (e), $(\square\neg\varphi\wedge\square\neg\psi)\rightarrow \square(\neg\varphi\wedge\neg\psi)$ is a theorem. Observe that $\neg(\varphi\vee\psi)\equiv(\neg\varphi\wedge\neg\psi)$ is a theorem since $\neg(\varphi\vee\psi)\leftrightarrow(\neg\varphi\wedge\neg\psi)$ is an intuitionistic tautology. By the Substitution Principle (SP), we may replace $\neg\varphi\wedge\neg\psi$ by $\neg(\varphi\vee\psi)$ in every context. Hence, $(\square\neg\varphi\wedge\square\neg\psi)\rightarrow \square\neg(\varphi\vee\psi)$ is a theorem and so is its contrapositive $\neg\square\neg(\varphi\vee\psi)\rightarrow\neg(\square\neg\varphi\wedge\square\neg\psi)$. Then by the second assertion of (g), we derive $\neg\square\neg(\varphi\vee\psi)\rightarrow (\neg\square\neg\varphi\vee\neg\square\neg\psi)$, i.e. $\Diamond (\varphi\vee\psi)\rightarrow (\Diamond\varphi\vee\Diamond\psi)$. Note that (TND) does not occur in the derivations. Thus, we may apply item (a) and obtain (j).
\end{proof}

\begin{lemma}\label{510}
The following are theorems of $L^{5AC}_N$ and of $L5^{AC^-}_N$:\\
(a) $\square(\varphi\rightarrow\psi)\rightarrow \square (K_i \varphi\rightarrow K_i\psi)$ and $\square(\varphi\rightarrow\psi)\rightarrow \square (C_G \varphi\rightarrow C_G\psi)$\\
(b) $\square\varphi\equiv \square K_i\varphi$ and $\square\varphi\equiv \square C_G\varphi$\\
(c) $K_i(\varphi\wedge\psi)\equiv (K_i\varphi\wedge K_i\psi)$ and $K_i(\varphi\vee\psi)\equiv (K_i\varphi\vee K_i\psi)$\\
(d) $\square(K_i K_j\varphi\rightarrow K_i\varphi)$\\
Moreover, axiom scheme (xiii) is redundant in $L5^{AC}_N$, i.e. it is derivable from the remaining axioms. 
\end{lemma}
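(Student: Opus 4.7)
The overall plan is to establish (a) first as a derived monotonicity rule for $K_i$ and $C_G$ under the box, and then read off (b)--(d) and the redundancy of (xiii) as short consequences via Lemma \ref{500}(a),(c), axiom (AN), and distribution. A preliminary step that will be used repeatedly is the auxiliary principle $\vdash\square\chi\rightarrow\square K_i\chi$: axiom (xi) with $G=\{i\}$ gives $\square\chi\rightarrow\square C_G\chi$, applying (AN) to (xii) followed by distribution (iii) gives $\square C_G\chi\rightarrow\square K_i\chi$, and one chains. For $C_G$ itself the analogous principle is just axiom (xi).

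For (a), I take an instance of (vii), namely $K_i(\varphi\rightarrow\psi)\rightarrow (K_i\varphi\rightarrow K_i\psi)$, apply (AN) to prefix a $\square$, and then apply distribution (iii) to pull the $\square$ inside, yielding $\square K_i(\varphi\rightarrow\psi)\rightarrow\square(K_i\varphi\rightarrow K_i\psi)$. Composing this with the auxiliary $\square(\varphi\rightarrow\psi)\rightarrow\square K_i(\varphi\rightarrow\psi)$ gives the first conjunct of (a). The second conjunct is entirely analogous, starting from axiom (ix) instead of (vii) and using (xi) in place of the auxiliary.

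Items (b)--(d) are now routine. For (b), from (xii) and (vi) we derive $C_G\varphi\rightarrow\varphi$ without (TND); with (xi) this gives the biconditional $\square\varphi\leftrightarrow\square C_G\varphi$, and Lemma \ref{500}(a) places the outer $\square$ to produce the identity. The $K_i$-version is identical, using the auxiliary $\square\chi\rightarrow\square K_i\chi$ and the necessitation of (vi). For (c), the two easy directions come from applying (a) to the IPC tautologies $\varphi\wedge\psi\rightarrow\varphi$, $\varphi\wedge\psi\rightarrow\psi$, $\varphi\rightarrow\varphi\vee\psi$, $\psi\rightarrow\varphi\vee\psi$ after necessitation; the conjunction-introduction direction uses $\varphi\rightarrow(\psi\rightarrow\varphi\wedge\psi)$ together with axiom (vii) applied twice; the disjunction-elimination direction is simply axiom (viii). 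Each derivation avoids (TND), so Lemma \ref{500}(a) converts the biconditionals into identities. For (d), apply (AN) to axiom (vi) with agent $j$ to obtain $\square(K_j\varphi\rightarrow\varphi)$ and feed this straight into (a).

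For the redundancy of (xiii) in $L5^{AC}_N$, apply (AN) to axiom (xii) to get $\square(C_G\varphi\rightarrow K_i\varphi)$; feed this into the $C_G$-part of (a) to obtain $\square(C_GC_G\varphi\rightarrow C_GK_i\varphi)$; strip the outer box with axiom (ii), and chain with the introspection axiom (xv) $C_G\varphi\rightarrow C_GC_G\varphi$ to conclude $C_G\varphi\rightarrow C_GK_i\varphi$. The only subtlety throughout is ensuring no derivation passes through (TND) before an appeal to Lemma \ref{500}(a); since every ingredient used is either an IPC tautology, an axiom distinct from (TND), or an application of (MP)/(AN)/distribution, this condition is met automatically. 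I do not expect any genuine obstacle, but the bookkeeping that the $C_G$-side of (a) is available at the moment (xiii) is being derived (and that axiom (xv) is available, i.e. we are in $L5^{AC}_N$ and not $L5^{AC^-}_N$) is the point most worth checking carefully.
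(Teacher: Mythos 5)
Your proposal is correct and follows essentially the same route as the paper: establish $\square\chi\rightarrow\square K_i\chi$ from (xi) and (xii), obtain (a) by necessitating and distributing the epistemic distribution axioms, read off (b)--(d) via Lemma \ref{500}(a), and derive (xiii) from (xv), (xii) and the $C_G$-part of (a). Your closing remark about keeping every derivation free of (TND) so that Lemma \ref{500}(a) (and hence the necessitated form of (xiii)) remains available is exactly the point the paper also flags.
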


\begin{proof}
(a): $\square(\varphi\rightarrow\psi)\rightarrow \square C_G (\varphi\rightarrow \psi)$ is an instance of scheme (xi). Now, consider (ix) and (iii) along with applications of rules (AN) and (MP). This yields the second assertion of (a). Using (xi) and (xii), one derives $\square\varphi\rightarrow\square K_i\varphi$. Thus, $\square(\varphi\rightarrow\psi)\rightarrow \square K_i (\varphi\rightarrow \psi)$ is a theorem. The first assertion of (a) now follows in a similar way as the second one.\\
(b): The derivations of $\square\varphi\leftrightarrow \square K_i\varphi$ and $\square\varphi\leftrightarrow \square C_G\varphi$ are straightforward. Now, (b) follows by Lemma \ref{500} (a).\\
(c): We show the second assertion. $K_i(\varphi\vee \psi)\rightarrow (K_i\varphi\vee K_i\psi)$ is a theorem by scheme (viii). $\varphi\rightarrow (\varphi\vee\psi)$ is an intuitionistic tautology, thus $\square (\varphi\rightarrow (\varphi\vee\psi))$ is a theorem. Now, one easily derives $K_i (\varphi\rightarrow (\varphi\vee\psi))$. Then, by distribution of knowledge, $K_i\varphi\rightarrow K_i (\varphi\vee\psi)$ is a theorem. Applying Lemma \ref{500} (a) yields the second assertion of (c). The proof of the first assertion of (c) is straightforward.\\ 
(d): $K_j\varphi \rightarrow\varphi$ is an instance of scheme (vi). By (AN), $\square (K_j\varphi \rightarrow\varphi)$ is a theorem. Then the first part of (a), together with (MP), yields $\square(K_i K_j\varphi\rightarrow K_i\varphi)$.\\ 
Finally, we prove the last assertion. By (a), $\square(C_G\varphi\rightarrow K_i\varphi)\rightarrow \square (C_G C_G \varphi\rightarrow C_G K_i\varphi)$ is a theorem. By scheme (xii), (AN) and (MP), $C_G C_G \varphi\rightarrow C_G K_i\varphi$ is a theorem. This, thogether with scheme (xv), yields scheme (xiii) $C_G\varphi\rightarrow C_G K_i\varphi$. By Lemma \ref{500} (a), we may apply (AN) to that formula. This shows that $L5^{AC}_N$ without scheme (xiii) is equivalent to $L5^{AC}_N$.\footnote{Notice that the argument does not work in $L5^{AC^-}_N$ where scheme (xv) is not available.} 
\end{proof}

\section{Algebraic semantics}

It is well-known that the class of all Heyting algebras constitutes a semantics for $\mathit{IPC}$.\footnote{It is enough to consider Heyting algebras with the Disjunction Property as in Definition \ref{810}.} A propositional formula $\varphi$ evaluates to the top element of any given Heyting algebra $\mathcal{H}$, under any assignment of elements of $\mathcal{H}$ to propositional variables, if and only if $\varphi$ is a theorem of $\mathit{IPC}$. In this sense, the greatest element of any given Heyting algebra represents intuitionistic truth, and we have strong completeness: $\varPhi\vdash_{\mathit{IPC}}\varphi$ if and only if for any Heyting algebra $\mathcal{H}$ and any assignment $\gamma\in H^V$, if $\varPhi$ is intuitionistically true in $\mathcal{H}$ under $\gamma$, then so is $\varphi$. 
Recall that a Heyting algebra is a bounded lattice such that for all elements $a,b$, the subset $\{c\mid f_\wedge(a,c)\le b\}$ has a greatest element $f_\rightarrow(a,b)$, called the relative pseudo-complement of $a$ with respect to $b$, where $f_\wedge$ is the infimum (meet) operation and $\le$ is the lattice ordering. For a Heyting algebra $\mathcal{H}$, we use the notation $\mathcal{H}=(M, f_\vee, f_\wedge, f_\bot, f_\rightarrow)$, where $M$ is the universe and $f_\vee$, $f_\wedge$, $f_\bot$, $f_\rightarrow$ are the usual operations for join, meet, least element and relative pseudo-complement (implication), respectively. The greatest element is given by $f_\top:= f_\rightarrow(f_\bot,f_\bot)$, and the pseudo-complement (negation) of $m\in M$ is defined by $f_\neg(m):=f_\rightarrow(m,f_\bot)$. A subset $F\subseteq M$ of the universe $M$ is called a filter if the following conditions are satisfied: $f_\top\in F$; and for any $m,m'\in M$: if $m\in F$ and $f_\rightarrow(m,m')\in F$, then $m'\in F$ (cf. \cite{chazak}). A filter $F$ is a proper filter if $f_\bot\notin F$. A prime filter is a proper filter $F$ such that $f_\vee(m,m')\in F$ implies $m\in F$ or $m'\in F$, for any $m,m'\in M$. Finally, an ultrafilter is a maximal proper filter. Every ultrafilter satisfies for all elements $m\in M$: $m\in U$ or $f_\neg(m)\in U$. It follows that $U$ mirrors the classical behaviour of logical connectives and represents, in this sense, classical truth. In particular, every ultrafilter is prime. Also recall that in any Heyting algebra, for any elements $m,m'$, the equivalence $m\le m' \Leftrightarrow f_\rightarrow (m,m')=f_\top$ holds true. \\
Furthermore, the following facts will be useful:

\begin{lemma}\label{700}
Let $\mathcal{H}$ be a Heyting algebra with universe $M$. Then the following hold.\\
(i) Any proper filter is the intersection of all prime filters containing it.\\  
(ii) Let $P$ be a filter, and $a,b\in M$. Then $f_\rightarrow(a,b)\in P$ iff for all prime filters $P'\supseteq P$: $a \in P'$ implies $b\in P'$.\\
(iii) If the smallest filter $\{f_\top\}$ is prime, then for all $a,b\in M$: $a\le b$ iff for all prime filters $P$: $a\in P$ implies $b\in P$.
\end{lemma}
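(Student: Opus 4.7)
My plan is to prove the three items in order, since (ii) uses (i) and (iii) uses (ii).

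For (i), the nontrivial direction is showing that whenever $a\notin F$ for a proper filter $F$, some prime filter extending $F$ still omits $a$. I would follow the standard Birkhoff/Stone extension argument: let $\mathcal{F}$ be the collection of filters $F'\supseteq F$ with $a\notin F'$, observe that $\mathcal{F}$ is non-empty and closed under chain unions (the union of a chain of filters omitting $a$ is itself a filter omitting $a$), and invoke Zorn's lemma to obtain a maximal $P\in\mathcal{F}$. Primeness is then established by contradiction: if $f_\vee(x,y)\in P$ while $x,y\notin P$, maximality forces the filters generated by $P\cup\{x\}$ and $P\cup\{y\}$ to contain $a$, giving $p_1,p_2\in P$ with $f_\wedge(p_1,x)\le a$ and $f_\wedge(p_2,y)\le a$; distributivity of the Heyting lattice then yields $f_\wedge(f_\wedge(p_1,p_2), f_\vee(x,y))\le a$, so $a\in P$, a contradiction.

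For (ii), the forward direction is immediate from the filter axiom (modus ponens): if $f_\rightarrow(a,b)\in P\subseteq P'$ and $a\in P'$, then $b\in P'$. For the converse I would argue contrapositively. Assuming $f_\rightarrow(a,b)\notin P$, form the filter $P^{*}$ generated by $P\cup\{a\}$, whose elements are exactly those $m$ with $f_\wedge(p,a)\le m$ for some $p\in P$. The Heyting adjunction $f_\wedge(p,a)\le b \Leftrightarrow p\le f_\rightarrow(a,b)$ shows that $b\in P^{*}$ would force $f_\rightarrow(a,b)\in P$; hence $b\notin P^{*}$. Part (i), applied to the proper filter $P^{*}$, supplies a prime filter $P''\supseteq P^{*}\supseteq P$ with $b\notin P''$ and $a\in P''$, refuting the right-hand side.

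For (iii), the forward direction follows because filters are upward closed: $a\le b$ is equivalent to $f_\rightarrow(a,b)=f_\top$, which lies in every filter, and modus ponens inside $P$ then carries $a\in P$ to $b\in P$. For the converse, I would specialize (ii) to $P=\{f_\top\}$. The hypothesis that $\{f_\top\}$ is prime makes it a proper filter, and since it is the smallest filter it is contained in every filter, so the quantifier ``prime $P'\supseteq\{f_\top\}$'' ranges over all prime filters. Thus the assumed condition is exactly the right-hand side of (ii) for the pair $(\{f_\top\},a,b)$, yielding $f_\rightarrow(a,b)\in\{f_\top\}$, i.e.\ $f_\rightarrow(a,b)=f_\top$, which is $a\le b$.

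The principal obstacle is the prime-filter extension in (i): it requires Zorn's lemma together with a careful use of distributivity to upgrade a maximal ``$a$-avoiding'' filter to a prime one. Once (i) is in hand, parts (ii) and (iii) reduce to routine manipulations of the adjunction $f_\wedge(p,a)\le b\Leftrightarrow p\le f_\rightarrow(a,b)$ and the fact that $\{f_\top\}$ is the minimum filter.
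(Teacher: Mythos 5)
Your proof is correct and follows the same overall architecture as the paper's: (i) by a prime-filter separation argument, (ii) by adjoining $a$ to $P$ and separating the resulting filter from $b$ via (i), and (iii) by specializing (ii) to $P=\{f_\top\}$. Two small differences are worth noting. In (ii) the paper works with the set $F_{a,P}=\{c\mid f_\rightarrow(a,c)\in P\}$ and verifies the filter axioms directly, whereas you take the filter $P^{*}$ generated by $P\cup\{a\}$; by the adjunction $f_\wedge(p,a)\le c \Leftrightarrow p\le f_\rightarrow(a,c)$ these are the same filter, so the two arguments are interchangeable (your implicit claim that $P^{*}$ is proper is immediate from $b\notin P^{*}$). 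In (i) your version is in fact the more careful one: the paper justifies the separation by asserting that an \emph{ultrafilter} (maximal proper filter) containing $F$ and omitting $a$ exists, which fails in general for Heyting algebras --- in the three-element chain $f_\bot < c < f_\top$ the only ultrafilter contains $c$, yet the prime filter $\{f_\top\}$ omits it --- while your Zorn argument on the poset of filters extending $F$ and omitting $a$, with primeness of the maximal element extracted via distributivity, is the standard correct route and proves exactly what (i) needs.
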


\begin{proof}
(i): Let $F$ be a proper filter of $\mathcal{H}$. For every $a\in M\smallsetminus F$, there is a prime filter $P_a$ containing $F$ such that $a\notin P_a$. In fact, by Zorn's Lemma, there is an ultrafilter with that property. Then $F=\bigcap_{a\notin F} P_a$.\\
(ii): Let $P$ be a prime filter, $a,b\in M$. The left-to-right implication of the assertion is clear by definition of a filter. Suppose $f_\rightarrow(a,b)\notin P$. Consider $F_{a,P}:=\{c\in M\mid f_\rightarrow(a,c)\in P\}$. We claim that $F_{a,P}$ is a filter. Obviously, $f_\top\in F_{a,P}$. Suppose $c\in F_{a,P}$ and $f_\rightarrow(c,d)\in F_{a,P}$, for $c,d\in M$. Then $f_\rightarrow(a,c)\in P$ and $f_\rightarrow (a, f_\rightarrow(c,d))\in P$. Since $((x\rightarrow y)\wedge (x\rightarrow (y\rightarrow z)) \rightarrow (x\rightarrow z)$ is an intuitionistic tautology, we conclude that $f_\rightarrow(a,d)\in P$, whence $d\in F_{a,P}$ and $F_{a,P}$ is a filter. Let $c\in P$. Of course, $f_\wedge(a,c)\le c$. Since $f_\rightarrow(a,c)$ is the greatest element $x$ such that $f_\wedge(a,x)\le c$, it follows that $c\le f_\rightarrow(a,c)$. Thus, $f_\rightarrow(a,c)\in P$. That is, $c\in F_{a,P}$. We have shown: $P\subseteq F_{a,P}$. Obviously, $a\in F_{a,P}$ and, by hypothesis, $b\notin F_{a,P}$.  By (i), it follows that there is a prime filter $P'$ extending $F_{a,P}$ such that $a\in P'$ and $b\notin P'$. We have $P\subseteq F_{a,P}\subseteq P'$. By contraposition, the right-to-left implication of assertion (ii) follows.\\
(iii): Suppose $\{f_\top\}$ is a prime filter. The equivalence $a\le b\Leftrightarrow f_\rightarrow(a,b)=f_\top$ is a well-known property of Heyting algebras. The assertion now follows from (ii).
\end{proof}  

\begin{definition}\label{810}
A model $\mathcal{M}$ is given by a Heyting algebra expansion
\begin{equation*}
\mathcal{M}=(M,\mathit{TRUE}, f_\vee, f_\wedge, f_\bot, f_\rightarrow, f_\square, (f_{K_i})_{i\in I}, (f_{C_G})_{\varnothing\neq G\subseteq I})
\end{equation*}
with universe $M$ whose elements are called propositions, a designated ultrafilter $\mathit{TRUE}\subseteq M$ which is the set of classically true propositions, and additionally unary operations $f_\square, f_{K_i}$, $f_{C_G}$ such that the following truth conditions are satisfied:\\
(i) $\mathcal{M}$ has the Disjunction Property: for all $m,m'\in M$, $f_\vee(m,m')=f_\top$ implies $m=f_\top$ or $m'=f_\top$. That is, the smallest filter $\{f_\top\}$ is prime.\\
(ii) For all $m\in M$:
\begin{equation*}
\begin{split}
f_\square(m)=
\begin{cases}
f_\top, \text{ if }m=f_\top\\
f_\bot, \text{ else}
\end{cases}
\end{split}
\end{equation*}
(iii) For every prime filter $F\subseteq M$, and for all $i\in I$ and all groups $G$, the following conditions (a)--(e) are fulfilled:\\
(a) The set $\mathit{BEL_i}(F):=\{m\in M\mid f_{K_i}(m)\in F\}$ is a filter.\\ 
(b) The set $\mathit{COMMON_G}(F):=\{m\in M\mid f_{C_G}(m)\in F\}$ is a filter.\\
(c) For every ultrafilter $U\supseteq F$: $\mathit{BEL_i}(F)\subseteq U$; in particular, $\mathit{BEL_i}(F)$ is a proper filter and $\mathit{BEL_i}(\mathit{TRUE})\subseteq \mathit{TRUE}$.\\
(d) $\mathit{COMMON_G}(F)\subseteq \mathit{BEL_i}(F)$, whenever $i\in G$.\\
(e) For any $m\in M$: if $m\in \mathit{COMMON_G}(F)$ then $f_{K_i}(m)\in \mathit{COMMON_G}(F)$, whenever $i\in G$.\\
(f) $\mathit{COMMON_G}(F)\subseteq \mathit{COMMON_{G'}}(F)$, whenever $G'\subseteq G$.
\end{definition}

Notice that the definition involves a relational structure given by the set of prime filters which can be viewed as `worlds' ordered by set-theoretical inclusion. Actually, this yields a relational semantics based on \textit{intuitionistic general frames} (cf. \cite{chazak}) with some additional structure regarding the epistemic ingredients. This kind of relational semantics was explicitly defined and studied for the logics $L5$, $EL5$ and $\mathit{IEL}$ in \cite{lewapal} where also its equivalence to algebraic semantics is shown. Considering Definition \ref{810} above and following the constructions presented in \cite{lewapal}, that frame-based semantics extends straightforwardly to a semantics with common knowledge equivalent to the algebraic conditions given in Definition \ref{810}. For space reasons, we skip here the details. Intuitively, $\mathit{BEL_i}(F)$ is the set of propositions known by agent $i$ at `world' $F$, and $\mathit{COMMON_G}(F)$ is the set of propositions that are common knowledge in $G$ at `world' $F$. Intuitionistic truth is represented by `world' $\{f_\top\}$, the smallest prime filter; and classical truth is determined by a designated `maximal world' $\mathit{TRUE}$. Observe that $f_\square(m)$ is true at `world' $F$ (i.e. $f_\square(m)\in F$) iff $m$ is true at the `root world' $\{f_\top\}$ iff $m$ is true at all `worlds' (i.e. is contained in all prime filters). Thus, regarding the modal operator, we actually have a $S5$-style Kripke model combined with the properties of an intuitionistic Kripke model for constructive reasoning.

\begin{lemma}\label{815}
Let $\mathcal{M}$ be a model. We have $f_{K_i}(f_\top)=f_\top= f_{C_G}(f_\top)$, for all $i\in I$ and all $\varnothing\neq G\subseteq I$. Moreover, the operations $f_{K_i}$ and $f_{C_G}$ are monotonic on $M$, i.e. $m\le m'$ implies $f_{K_i}(m)\le f_{K_i}(m')$ and $f_{C_G}(m)\le f_{C_G}(m')$.
\end{lemma}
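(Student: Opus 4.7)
The proof will reduce both claims to the fact that, for every prime filter $F$ of $\mathcal{M}$, the sets $\mathit{BEL}_i(F)$ and $\mathit{COMMON}_G(F)$ are filters (Definition \ref{810}(iii)(a),(b)), combined with Lemma \ref{700}(iii) (applicable because the Disjunction Property guarantees that $\{f_\top\}$ is prime).

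First I would prove $f_{K_i}(f_\top)=f_\top$. Since $\mathit{BEL}_i(F)$ is a filter for every prime filter $F$, it contains $f_\top$, and hence $f_{K_i}(f_\top)\in F$ for every prime filter $F$. By Lemma \ref{700}(iii) this gives $f_\top\le f_{K_i}(f_\top)$, which forces equality because $f_\top$ is the greatest element. The identity $f_{C_G}(f_\top)=f_\top$ is obtained in exactly the same way, using that $\mathit{COMMON}_G(F)$ is a filter.

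For monotonicity, the key auxiliary observation is that any filter in a Heyting algebra is upward closed: if $a\le b$ then $f_\rightarrow(a,b)=f_\top$, which lies in every filter, so $a\in F$ forces $b\in F$ by the defining closure condition. Now suppose $m\le m'$ and let $P$ be an arbitrary prime filter with $f_{K_i}(m)\in P$. Then $m\in \mathit{BEL}_i(P)$, so by upward closure $m'\in \mathit{BEL}_i(P)$, i.e.\ $f_{K_i}(m')\in P$. Applying Lemma \ref{700}(iii) once more yields $f_{K_i}(m)\le f_{K_i}(m')$. Replacing $\mathit{BEL}_i(P)$ by $\mathit{COMMON}_G(P)$ gives the analogous conclusion for $f_{C_G}$.

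There is no real obstacle: the whole argument is a bookkeeping exercise wrapping the semantic clauses of Definition \ref{810}(iii) into Lemma \ref{700}(iii). The only point that deserves explicit mention (and which I would state as a short intermediate observation) is the upward closure of filters under $\le$, since the excerpt's definition of filter is phrased in terms of $f_\rightarrow$ rather than in the more common ``upward closed plus meet-closed'' form.
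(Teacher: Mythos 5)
Your proof is correct and follows essentially the same route as the paper's: both parts reduce to the filter conditions (iii)(a),(b) of Definition \ref{810} together with Lemma \ref{700}(iii), the only cosmetic difference being that the paper derives $f_{K_i}(f_\top)=f_\top$ by instantiating directly at the prime filter $\{f_\top\}$ rather than quantifying over all prime filters. Your explicit remark on upward closure of filters is a harmless (and helpful) elaboration of a step the paper leaves implicit.
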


\begin{proof}
By truth condition (i), $\{f_\top\}$ is a prime filter. Now, consider $F=\{f_\top\}$ and $m=f_\top$ in truth conditions (iii)(a) and (iii)(b). Then the first assertion of the Lemma follows. Suppose $m,m'\in M$ and $m\le m'$. By Lemma \ref{700}(iii), it is enough to show: $f_{K_i}(m)\in F$ implies $f_{K_i}(m')\in F$, for all prime filters $F$. Let $F$ be a prime filter. Then $f_{K_i}(m)\in F$ implies $m\in \mathit{BEL_i}(F)$ implies $m'\in\mathit{BEL_i}(F)$ implies $f_{K_i}(m')\in F$. The assertion regarding the operators $f_{C_G}$ follows similarly.
\end{proof}

\begin{definition}\label{850}
Let $\mathcal{M}$ be a model. In the following, we consider the truth conditions given in Definition \ref{810}.
\begin{itemize}
\item $\mathcal{M}$ is an $L5^{AC^-}_N$-model if, instead of (iii)(c), the following stronger condition (c)* is satisfied: For every prime filter $F$ and every $i\in I$, $\mathit{BEL_i}(F)$ is a prime filter and $\mathit{BEL_i}(F)\subseteq F$. 
\item $\mathcal{M}$ is an $L5^{AC}_N$-model if condition (c)* holds, $\mathit{COMMON_G}(F)$ is a prime filter, for every prime filter $F$, and the following additional truth condition (g) is fulfilled for every prime filter $F$, every group $G$ and every $m\in M$:\\
(g) If $m\in COMMON_G(F)$, then $f_{C_G}(m)\in COMMON_G(F)$.
\item The Heyting algebra reduct of $\mathcal{M}$ with ultrafilter $\mathit{TRUE}$ and operators $f_\square$ and $f_K$ (i.e. $I=\{1\}$, single-agent case) is called an $EL5$-model. Only the conditions (i), (ii), and (iii)(a) and (c) are relevant.
\item The Heyting algebra reduct of $\mathcal{M}$ with ultrafilter $\mathit{TRUE}$ and operator $f_\square$ is called an $L5$-model. Of course, only the conditions (i) and (ii) are relevant.
\item The Heyting algebra reduct of $\mathcal{M}$ with operator $f_K$ (single-agent case: $I=\{1\}$) is said to be an algebraic $\mathit{IEL}$-model if the following additional truth condition of intuitionistic co-reflection (IntCo) is satisfied:\\
(IntCo) $F\subseteq \mathit{BEL}(F)$, for every prime filter $F$, where $\mathit{BEL(F)}:= \mathit{BEL_1}(F)$. \\Besides that condition, only (i), (iii)(a) and (iii)(c) are relevant.
\end{itemize}
\end{definition}

Algebraic semantics for $L5$ and $EL5$ is originally presented in \cite{lewjlc2} and \cite{lewigpl, lewapal}, respectively, in essentially the way as formulated in the next Theorem \ref{870}. Algebraic semantics of $\mathit{IEL}$, in the form as presented in \cite{lewigpl}, is also described in Theorem \ref{870} below. 

\begin{theorem}\label{870}
A Heyting algebra expansion
\begin{equation*}
 \mathcal{M}=(M,\mathit{TRUE}, f_\vee, f_\wedge, f_\bot, f_\rightarrow, f_\square, (f_{K_i})_{i\in I}, (f_{C_G})_{\varnothing\neq G\subseteq I})
\end{equation*}
with ingredients as before is a model in the sense of Definition \ref{810} if and only if the following conditions are fulfilled for all $m, m'\in M$, all $i\in I$ and all groups $G$:\\
(A) $\mathcal{M}$ has the Disjunction Property\\
(B)
\begin{equation*}
\begin{split}
f_\square(m)=
\begin{cases}
f_\top, \text{ if }m=f_\top\\
f_\bot, \text{ else}
\end{cases}
\end{split}
\end{equation*}
(C) $f_{K_i}(f_\rightarrow(m,m'))\le f_\rightarrow(f_{K_i}(m),f_{K_i}(m'))$\\
(D) $f_{C_G}(f_\rightarrow(m,m'))\le f_\rightarrow(f_{C_G}(m),f_{C_G}(m'))$\\
(E) $f_{C_G}(m)\le f_{K_i}(m)$, whenever $i\in G$\\
(F) $f_{C_G}(m)\le f_{C_G} (f_{K_i}(m))$, whenever $i\in G$\\
(G) $f_{C_G}(m)\le f_{C_{G'}}(m)$, whenever $G'\subseteq G$\\
(H) $f_{C_G}(f_\top)=f_\top$\\
(I) $f_{K_i}(m)\le f_\neg(f_\neg(m))$.\\
-- $\mathcal{M}$ is an $L5^{AC^-}_N$-model if instead of (I) the stronger condition (I)* $f_{K_i}(m)\le m$ holds, and for all $m,m'\in M$: $f_{K_i}(f_\vee(m,m'))\le f_\vee(f_{K_i}(m),f_{K_i}(m'))$.\\ 
-- $\mathcal{M}$ is an $L5^{AC}_N$-model if it is an $L5^{AC^-}_N$-model and for all $m,m'\in M$ and all groups $G$, $f_{C_G}(f_\vee(m,m'))\le f_\vee(f_{C_G}(m),f_{C_G}(m'))$ and introspection of common knowledge $f_{C_G}(m)\le f_{C_G} (f_{C_G}(m))$ are satisfied.\footnote{Note that introspection along with (E) and (I)* implies $f_{C_G}(m)= f_{C_G} (f_{C_G}(m))$. In this sense, common knowledge is a fixed point. Also notice that (F) follows from introspection of common knowledge, (E) and monotonicity of $f_{C_G}$.}\\
-- The appropriate reduct of $\mathcal{M}$ is an $EL5$-model if we drop common knowledge and consider the single agent case $I=\{1\}$ and only the conditions (A), (B), (C) and (I), and $f_{K}(f_\top)=f_\top$ instead of (H) \\
-- The appropriate reduct of $\mathcal{M}$ is an $L5$-model if we exclude all epistemic ingredients and consider only the conditions (A), (B).\\
-- The appropriate reduct of $\mathcal{M}$ is an $\mathit{IEL}$-model if we drop common knowledge, consider the single agent case $I=\{1\}$ and the condtions (A), (C), (I), and additionally (IntCo): $m\le f_K(m)$, for all $m\in M$.
\end{theorem}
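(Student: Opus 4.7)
The plan is to prove the two directions of each equivalence by translating, with Lemma \ref{700}(iii), between the filter-theoretic conditions of Definition \ref{810} and the algebraic inequalities (A)--(I). The bridge is: whenever $\{f_\top\}$ is prime (Disjunction Property), $a\le b$ iff every prime filter containing $a$ contains $b$. Conditions (A) and (B) match (i) and (ii) of Definition \ref{810} verbatim, so the remaining work is a condition-by-condition translation.

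First I would deal with the filter closure clauses (iii)(a), (iii)(b). Closure of $\mathit{BEL_i}(F)$ under modus ponens, i.e. $f_{K_i}(m)\in F$ and $f_{K_i}(f_\rightarrow(m,m'))\in F$ imply $f_{K_i}(m')\in F$, amounts, via Lemma \ref{700}(iii) and the fact that every prime $F$ is a filter, to (C) $f_{K_i}(f_\rightarrow(m,m'))\le f_\rightarrow(f_{K_i}(m),f_{K_i}(m'))$; symmetrically for (D). The top-element clause of ``filter'' requires $f_{K_i}(f_\top)=f_\top=f_{C_G}(f_\top)$: the second equality is (H), and the first follows from (E) and (H) via $f_\top=f_{C_G}(f_\top)\le f_{K_i}(f_\top)$ for any $G\ni i$ (e.g.\ $G=I$). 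The inclusion clauses (iii)(d), (iii)(e), (iii)(f) translate directly into (E), (F), (G) by Lemma \ref{700}(iii).

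The main obstacle, and the one step that is not a direct application of Lemma \ref{700}(iii), is (iii)(c) $\Leftrightarrow$ (I). For a prime filter $F$, using that every ultrafilter $U$ contains exactly one of $m$ and $f_\neg(m)$, one argues that $m\in U$ for every ultrafilter $U\supseteq F$ iff no ultrafilter extends $F\cup\{f_\neg(m)\}$, iff (by Zorn's Lemma) $F\cup\{f_\neg(m)\}$ generates the improper filter, iff some $a\in F$ satisfies $f_\wedge(a,f_\neg(m))=f_\bot$, i.e.\ $a\le f_\neg(f_\neg(m))$, iff $f_\neg(f_\neg(m))\in F$. So (iii)(c) boils down to: $f_{K_i}(m)\in F$ implies $f_\neg(f_\neg(m))\in F$ for every prime filter $F$, which by Lemma \ref{700}(iii) is exactly (I). The ``proper filter'' subclause of (iii)(c) and $\mathit{BEL_i}(\mathit{TRUE})\subseteq\mathit{TRUE}$ are immediate once we know every $U\supseteq F$ extends $\mathit{BEL_i}(F)$ (use any ultrafilter extension of the proper filter $F$, and $U=\mathit{TRUE}$ for the second clause).

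Finally I would dispatch the reducts by the same technique. For $L5^{AC^-}_N$: $\mathit{BEL_i}(F)\subseteq F$ is, via Lemma \ref{700}(iii), the inequality (I)$^*$ $f_{K_i}(m)\le m$; and $\mathit{BEL_i}(F)$ being \emph{prime} (rather than merely proper) is equivalent to $f_{K_i}(f_\vee(m,m'))\le f_\vee(f_{K_i}(m),f_{K_i}(m'))$, since primeness of $F$ transports through the inequality and conversely a counterexample to the inequality produces, via Lemma \ref{700}(iii), a prime $F$ witnessing failure of primeness of $\mathit{BEL_i}(F)$. For $L5^{AC}_N$ the same argument applies to $\mathit{COMMON_G}$, and the introspection inequality $f_{C_G}(m)\le f_{C_G}(f_{C_G}(m))$ corresponds to condition (g) of Definition \ref{850} by Lemma \ref{700}(iii); the footnote's remark that (F) becomes redundant is then the algebraic counterpart of Lemma \ref{510}. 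The reducts for $EL5$, $L5$, and $\mathit{IEL}$ are obtained by keeping only the relevant clauses and, in the $\mathit{IEL}$ case, translating (IntCo) $F\subseteq\mathit{BEL}(F)$ into $m\le f_K(m)$ again by Lemma \ref{700}(iii); cf.\ the analogous treatments in \cite{lewjlc2, lewigpl, lewapal}.
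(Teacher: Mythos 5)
Your proposal is correct and follows essentially the same route the paper indicates (the paper gives no detailed proof, stating only that the result is ``straightforward and relies essentially on Lemma \ref{700}(iii) and filter properties''); your treatment of (iii)(c) $\Leftrightarrow$ (I) via ultrafilter extensions of $F\cup\{f_\neg(m)\}$ and of primeness of $\mathit{BEL_i}(F)$ via the $f_\vee$-inequality fills in exactly the intended details. One small point: for the direction from modus-ponens closure of $\mathit{BEL_i}(F)$ to inequality (C) (and likewise (D)) you also need Lemma \ref{700}(ii), not just (iii), to reduce membership of $f_\rightarrow(f_{K_i}(m),f_{K_i}(m'))$ in a prime filter to a statement about its prime extensions.
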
   

Theorem \ref{870} is useful for model constructions. It hides the relational structure on prime theories which is often not relevant for the construction of an algebraic model. The proof of Theorem \ref{870} is straightforward and relies essentially on Lemma \ref{700}(iii) and filter properties. 

\begin{definition}\label{900}
Given a model $\mathcal{M}$, an assignment is a function $\gamma\colon V\rightarrow M$ that extends in the canonical way to an `homomorphism' $\gamma^*\colon Fm\rightarrow M$. We simplify notation and write $\gamma$ instead of the uniquely determined $\gamma^*$. The tuple $(\mathcal{M},\gamma)$ is called an interpretation. We consider two kinds of satisfaction relations between interpretations and formulas. If $\mathcal{M}$ is an $\mathit{IEL}$-model, then we define
\begin{equation*}
(\mathcal{M},\gamma)\vDash_{\mathit{IEL}}\varphi :\Leftrightarrow \gamma(\varphi)=f_\top,
\end{equation*}
where $\varphi$ belongs here to the sublanguage $Fm_e\subseteq Fm$, i.e. the language of $\mathit{IEL}$.
If $\mathcal{L}\in \{L5, EL5, L5^{AC^-}_N, L5^{AC}_N\}$ and $\mathcal{M}$ is an $\mathcal{L}$-model, then we define
\begin{equation*}
(\mathcal{M},\gamma)\vDash_{\mathcal{L}}\varphi :\Leftrightarrow \gamma(\varphi)\in\mathit{TRUE},
\end{equation*}
where $\varphi$ is any formula of the underlying object language of the respective logic.
If the context it allows, we omit the index $\mathcal{L}$. Of course, the satisfaction relations extend to sets of formulas in the usual way.
\end{definition}

The relation of logical consequence is defined as usual. If $\mathcal{L}$ is one of the logics $\mathit{IEL}$, $L5$, $EL5$, $L5^{AC^-}_N$, $L5^{AC}_N$, and $\Phi\cup\{\varphi\}$ is a set of formulas of the respective object language, then $\Phi\Vdash_\mathcal{L}\varphi$ $:\Leftrightarrow$ for every interpretation $(\mathcal{M},\gamma)$, where $\mathcal{M}$ is an $\mathcal{L}$-model, $(\mathcal{M},\gamma)\vDash_\mathcal{L}\Phi$ implies $(\mathcal{M},\gamma)\vDash_\mathcal{L}\varphi$.

\section{Soundness and Completeness}

We consider the logics $\mathit{IEL}$ and $L5^{AC}_N$ and show that they are sound and complete w.r.t. their respective classes of algebraic models. Soundness and completeness of $L5^{AC^-}_N$, $EL5$ and $L5$ then follows similarly. 

\begin{theorem}\label{900}
For any $\varPhi\cup\{\varphi\}\subseteq Fm_e$, $\varPhi\vdash_{\mathit{IEL}}\varphi$ implies $\varPhi\Vdash_{\mathit{IEL}}\varphi$.
\end{theorem}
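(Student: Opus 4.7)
The plan is a routine induction on the length of a derivation $\varphi_0,\ldots,\varphi_n=\varphi$ of $\varphi$ from $\varPhi$ in $\mathit{IEL}$. Fix an arbitrary $\mathit{IEL}$-model $\mathcal{M}$ and an assignment $\gamma$ with $(\mathcal{M},\gamma)\vDash_{\mathit{IEL}}\varPhi$; I will show by induction on $k\le n$ that $\gamma(\varphi_k)=f_\top$. The base case splits into two subcases: if $\varphi_k\in\varPhi$, the conclusion is immediate from the hypothesis on $\gamma$; if $\varphi_k$ is an axiom, I must verify that each axiom scheme is valuated to $f_\top$ under every assignment into an $\mathit{IEL}$-model.

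For the axiom cases, I would invoke Theorem \ref{870} applied to the $\mathit{IEL}$-reduct, which gives exactly the algebraic inequalities needed. Concretely: the $\mathit{IPC}$-tautologies evaluate to $f_\top$ in every Heyting algebra (a standard fact about the variety of Heyting algebras); the distribution axiom $K(\varphi\rightarrow\psi)\rightarrow (K\varphi\rightarrow K\psi)$ reduces to showing $f_K(f_\rightarrow(m,m'))\le f_\rightarrow(f_K(m),f_K(m'))$, which is condition (C); co-reflection $\varphi\rightarrow K\varphi$ reduces to $m\le f_K(m)$, which is condition (IntCo); and intuitionistic reflection $K\varphi\rightarrow\neg\neg\varphi$ reduces to $f_K(m)\le f_\neg(f_\neg(m))$, which is condition (I). In each case the required equality $\gamma(\varphi_k)=f_\top$ follows because $a\le b$ is equivalent to $f_\rightarrow(a,b)=f_\top$ in any Heyting algebra.

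For the inductive step, the only rule of $\mathit{IEL}$ is Modus Ponens, so suppose $\varphi_k$ is obtained from earlier $\varphi_j$ and $\varphi_\ell=\varphi_j\rightarrow\varphi_k$. By induction, $\gamma(\varphi_j)=f_\top$ and $\gamma(\varphi_j\rightarrow\varphi_k)=f_\rightarrow(\gamma(\varphi_j),\gamma(\varphi_k))=f_\top$; the latter gives $\gamma(\varphi_j)\le\gamma(\varphi_k)$, and combined with $\gamma(\varphi_j)=f_\top$ and the fact that $f_\top$ is the largest element, we obtain $\gamma(\varphi_k)=f_\top$. Since $\mathcal{M}$ and $\gamma$ were arbitrary, $\varPhi\Vdash_{\mathit{IEL}}\varphi$ follows.

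There is no real obstacle here; the work has already been done in Theorem \ref{870}, which packages the semantic conditions from Definition \ref{810} as the precise algebraic inequalities mirroring the $\mathit{IEL}$-axioms. The only mildly non-trivial point is the observation that in an $\mathit{IEL}$-model we use the satisfaction clause $\gamma(\varphi)=f_\top$ (not $\gamma(\varphi)\in\mathit{TRUE}$), so that the validity of an axiom in $\mathcal{M}$ amounts exactly to the inequality being an equality with $f_\top$; this matches the fact that $\mathit{IEL}$-models are formulated as Heyting algebras without a designated ultrafilter, in contrast to the other systems treated later.
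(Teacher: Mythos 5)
Your proof is correct and follows essentially the same route as the paper: validity of the $\mathit{IPC}$-tautology schemes is the standard Heyting-algebra fact, the epistemic axioms are handled by conditions (C), (I) and (IntCo) of Theorem \ref{870}, and Modus Ponens is sound because $f_\rightarrow(a,b)=f_\top$ together with $a=f_\top$ forces $b=f_\top$. The paper merely compresses the induction on derivations into ``it is enough to show that all axioms denote the top element''; your added remark about the $\mathit{IEL}$ satisfaction clause being $\gamma(\varphi)=f_\top$ rather than membership in $\mathit{TRUE}$ is an accurate and worthwhile observation.
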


\begin{proof}
It is enough to show that all axioms of $\mathit{IEL}$ are true, i.e. denote the top element in every algebraic $\mathit{IEL}$-model under every assignment. This is clear for formulas having the form of an intuitionistic tautology. The validity of the remaining axioms follows from the conditions (C), (I) and (IntCo) of Theorem \ref{870}.
\end{proof}

Weak completeness of $\mathit{IEL}$ w.r.t. algebraic semantics is shown in \cite{lewigpl}. For the convenience of the reader, we outline here a proof which is based on the alternative definition of algebraic $\mathit{IEL}$-models given in Definition \ref{850}. We consider the Lindenbaum-Tarski algebra of $\mathit{IEL}$. Its elements are the equivalence classes $\overline{\varphi}$ modulo logical equivalence in $\mathit{IEL}$, for $\varphi\in Fm_e$. By $\mathit{IPC}$ and epistemic axioms of $\mathit{IEL}$ it follows that the operations $f_*(\overline{\varphi},\overline{\psi}):=\overline{\varphi * \psi}$, $*\in\{\vee,\wedge,\rightarrow\}$, $f_\bot := \overline{\bot}$ and $f_K(\overline{\varphi}) := \overline{K\varphi}$ are all well-defined. This yields a Heyting algebra $\mathcal{M}$ with operator $f_K$ and lattice ordering $\overline{\varphi}\le\overline{\psi}$ $\Leftrightarrow$ $\vdash_{\mathit{IPC}}\varphi\rightarrow\psi$. In \cite{artpro}, it is shown that $\mathit{IEL}$ has the Disjunction Property. Thus, $\mathcal{M}$ has the Disjunction Property, i.e. ${f_\top}$ is the smallest prime filter. We show that the conditions (iii)(a) and (iii)(c) of Definition \ref{810} are satisfied. For every prime filter $F$, the set $\mathit{BEL}(F):=\{\overline{\varphi}\mid f_K(\overline{\varphi})\in F\}$ is a filter because of the distribution axiom of $\mathit{IEL}$ and the fact that $\top\rightarrow K\top$ is a theorem which ensures that $f_K(\overline{\top})=\overline{K\top}=\overline{\top}=f_\top\in F$ and thus $f_\top\in \mathit{BEL}(F)$. Hence, (iii)(a) holds. Now suppose $F$ is a prime filter and $U$ is an ultrafilter such that $F\subseteq U$. Since $K\varphi\rightarrow\neg\neg\varphi$ is a theorem of $\mathit{IEL}$, we have $\overline{K\varphi}\le\overline{\neg\neg\varphi}$. Then $\overline{\varphi}\in \mathit{BEL}(F)$ implies $f_K(\overline{\varphi})\in F$ implies $\overline{\neg\neg\varphi}\in F$ implies $\overline{\varphi}\in U$. Hence, $\mathit{BEL}(F)\subseteq U$. Thus, the truth conditions of an algebraic $\mathit{IEL}$-model as established in Definitions \ref{850} and \ref{810} are satisfied. Let $\gamma\in M^V$ be the assignment $x\mapsto\overline{x}$. By induction on formulas, one shows $\gamma(\varphi)=\overline{\varphi}$ for every formula $\varphi\in Fm_e$. Then $(\mathcal{M},\gamma)\vDash\varphi$ iff $\gamma(\varphi)=\overline{\varphi}=f_\top=\overline{\top}$ iff $\vdash_{\mathit{IEL}}\varphi\leftrightarrow\top$ iff $\vdash_{\mathit{IEL}}\varphi$.

\begin{corollary}\label{930}
For every formula $\varphi\in Fm_e$, $\vdash_{\mathit{IEL}}\varphi$ $\Leftrightarrow$ $\Vdash_{\mathit{IEL}}\varphi$.
\end{corollary}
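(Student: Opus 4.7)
The plan is to combine soundness (Theorem \ref{900}, specialized to $\varPhi = \varnothing$) with the Lindenbaum-Tarski construction that has just been outlined in the paragraph preceding the corollary. Both directions are essentially already in hand; the corollary is merely an explicit packaging.

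For the forward implication $\vdash_{\mathit{IEL}}\varphi \Rightarrow \Vdash_{\mathit{IEL}}\varphi$, I would simply invoke Theorem \ref{900} with an empty premise set. By the definition of the satisfaction relation, this gives $\gamma(\varphi) = f_\top$ in every algebraic $\mathit{IEL}$-model $\mathcal{M}$ under every assignment $\gamma$, which is exactly $\Vdash_{\mathit{IEL}}\varphi$.

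For the converse $\Vdash_{\mathit{IEL}}\varphi \Rightarrow \vdash_{\mathit{IEL}}\varphi$, I would argue by contraposition. Assume $\nvdash_{\mathit{IEL}}\varphi$. Take the Lindenbaum-Tarski algebra $\mathcal{M}$ from the paragraph above the corollary, equipped with the canonical assignment $\gamma \colon x \mapsto \overline{x}$. The verification carried out there shows that $\mathcal{M}$ genuinely satisfies the defining conditions of an algebraic $\mathit{IEL}$-model (Disjunction Property via \cite{artpro}, filter condition (iii)(a) via the distribution axiom plus $\overline{K\top} = \overline{\top}$, and the intuitionistic reflection condition (iii)(c) via $K\varphi \rightarrow \neg\neg\varphi$), and that $\gamma(\psi) = \overline{\psi}$ for every $\psi \in Fm_e$ by induction on formula structure. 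Now $\gamma(\varphi) = f_\top$ would mean $\overline{\varphi} = \overline{\top}$, i.e. $\vdash_{\mathit{IEL}}\varphi \leftrightarrow \top$, hence $\vdash_{\mathit{IEL}}\varphi$, contradicting our assumption. So $(\mathcal{M},\gamma) \nvDash_{\mathit{IEL}}\varphi$, which falsifies $\Vdash_{\mathit{IEL}}\varphi$.

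There is no real obstacle here: all the technical work (the coherent definability of the operations on equivalence classes, the Disjunction Property, verifying truth conditions (iii)(a) and (iii)(c), and the inductive identity $\gamma(\psi) = \overline{\psi}$) has already been discharged in the construction preceding the statement. The only thing worth being slightly careful about is that the corollary asserts weak (not strong) completeness, so restricting Theorem \ref{900} to $\varPhi = \varnothing$ suffices and no compactness or saturation argument is needed.
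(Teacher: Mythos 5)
Your proposal is correct and follows exactly the paper's intended argument: the forward direction is Theorem \ref{900} with $\varPhi=\varnothing$, and the converse is the Lindenbaum--Tarski construction outlined in the paragraph immediately preceding the corollary, which yields an algebraic $\mathit{IEL}$-model and canonical assignment with $\gamma(\varphi)=\overline{\varphi}$, so that $\gamma(\varphi)=f_\top$ iff $\vdash_{\mathit{IEL}}\varphi$. Nothing is missing.
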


\begin{theorem}\label{950}
Let $\mathcal{L}$ be the logic $L5$, $EL5$, $L5^{AC^-}_N$ or $L5^{AC}_N$. For any set $\varPhi\cup\{\varphi\}$ of the respective object language, $\varPhi\vdash_\mathcal{L}\varphi$ implies $\varPhi\Vdash_\mathcal{L}\varphi$.
\end{theorem}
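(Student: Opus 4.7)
The plan is to proceed by induction on the length of a derivation of $\varphi$ from $\varPhi$, showing that whenever $(\mathcal{M},\gamma) \vDash_\mathcal{L} \varPhi$ for an $\mathcal{L}$-model $\mathcal{M}$, every line $\varphi_k$ of the derivation satisfies $\gamma(\varphi_k) \in \mathit{TRUE}$. The base cases split into hypotheses (trivial) and axioms. Modus Ponens is immediate from the fact that $\mathit{TRUE}$ is an ultrafilter, hence a filter: if $\gamma(\varphi)\in \mathit{TRUE}$ and $f_\rightarrow(\gamma(\varphi),\gamma(\psi))=\gamma(\varphi\rightarrow\psi)\in\mathit{TRUE}$, then $\gamma(\psi)\in\mathit{TRUE}$.

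The auxiliary fact I will prove once up front, strengthening the treatment of axioms, is that every axiom distinct from (TND) actually denotes the top element $f_\top$ under every assignment, not merely some member of $\mathit{TRUE}$. This handles the rule (AN) at a stroke: since (AN) may only be applied to non-(TND) axioms, its premise satisfies $\gamma(\varphi)=f_\top$, whence by truth condition (B) of Theorem \ref{870} we obtain $\gamma(\square\varphi)=f_\square(f_\top)=f_\top\in\mathit{TRUE}$. For (TND) itself, the ultrafilter property of $\mathit{TRUE}$ gives $\gamma(\varphi)\in\mathit{TRUE}$ or $f_\neg(\gamma(\varphi))\in\mathit{TRUE}$, so $\gamma(\varphi\vee\neg\varphi)\in\mathit{TRUE}$ by closure under joins.

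To verify that each non-(TND) axiom denotes $f_\top$, I will invoke the algebraic characterisation in Theorem \ref{870} together with the standard equivalence $m\le m' \Leftrightarrow f_\rightarrow(m,m')=f_\top$. Schemes of the form of an $\mathit{IPC}$-tautology denote $f_\top$ by general Heyting-algebra semantics of $\mathit{IPC}$. For the modal axioms (ii), (iv), (v), a two-case analysis via condition (B) suffices: if $\gamma(\varphi)=f_\top$ both sides evaluate to $f_\top$, and otherwise $f_\square(\gamma(\varphi))=f_\bot$ makes the implication vacuously top. Axiom (i) and the disjunction principles (viii), (x) translate directly into the disjunction clauses of Theorem \ref{870}; the distribution axioms (iii), (vii), (ix) translate into conditions (C) and (D); and reflection (vi), the bridging principles (xi)--(xiv), and introspection (xv) are exactly conditions (I)*, (E), (F), (G), and the introspection clause. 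For each logic $\mathcal{L}$ among $L5$, $EL5$, $L5^{AC^-}_N$, $L5^{AC}_N$, only the axioms included in $\mathcal{L}$ need to be verified, and each such axiom corresponds to an algebraic condition that is part of the definition of an $\mathcal{L}$-model.

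The only genuinely delicate point is keeping the two notions of validity straight: truth as belonging to the designated ultrafilter $\mathit{TRUE}$ versus denoting the lattice top $f_\top$. This is precisely why (AN) is restricted to non-(TND) axioms—(TND) in general denotes a strictly smaller element than $f_\top$, so $\square(\varphi\vee\neg\varphi)$ must not be derivable. Once this dichotomy is respected in the preparatory axiom-validity lemma, the induction on derivation length goes through uniformly for all four systems; no further subtlety arises, and the argument does not require any strengthening of the inductive hypothesis beyond $\gamma(\varphi_k)\in\mathit{TRUE}$.
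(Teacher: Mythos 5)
Your proposal is correct and follows essentially the same route as the paper: show that every axiom other than (TND) denotes $f_\top$ (via the equivalence $m\le m'\Leftrightarrow f_\rightarrow(m,m')=f_\top$ and the algebraic conditions of Theorem \ref{870}), handle (TND) and (MP) through the ultrafilter/filter properties of $\mathit{TRUE}$, obtain soundness of (AN) from the top-element claim, and conclude by induction on derivations. One minor bookkeeping slip: axiom (xi), $\square\varphi\rightarrow\square C_G\varphi$, does not correspond to any of (E)--(G) but follows from condition (B) together with $f_{C_G}(f_\top)=f_\top$ (condition (H) / Lemma \ref{815}), exactly as for the other $\square$-axioms.
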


\begin{proof}
It suffices to consider logic $L5^{AC}_N$. Let $\mathcal{M}$ be an $L5^{AC}_N$-model and $\gamma\in M^V$ an assignment. We show that all axioms denote classically true propositions, i.e. elements of ultrafilter $\mathit{TRUE}$. This is clear for (TND). We claim that the remaining axioms denote the top element of the Heyting lattice. Then follows that also rule (AN) is sound. Of course, all intuitionistic tautologies and substitution-instances denote $f_\top$. Note that all other axioms are of the form: $\varphi\rightarrow\psi$. Since $\gamma(\varphi\rightarrow\psi)=f_\rightarrow(\gamma(\varphi),\gamma(\psi))=f_\top$ iff $\gamma(\varphi)\le\gamma(\psi)$, it is enough to show that (*) $\gamma(\varphi)\le \gamma(\psi)$ holds true. For this purpose, it might be more comfortable to use Theorem \ref{870} instead of the model definitions. Concerning the axioms (i)--(v), (*) follows from condition (B): for any $m\in M$, either $f_\square(m)=f_\top$ or $f_\square(m)=f_\bot$. Referring to axiom (xi), (*) follows by truth condition (B) along with the first assertion of Lemma \ref{815}: $f_{C_G}(f_\top)=f_\top$. Concerning the remaining axioms, (*) follows from corresponding conditions given in Theorem \ref{870}. Finally, rule (MP) is sound because $\mathit{TRUE}$ is a filter. The assertion of the Theorem now follows by induction on derivations.
\end{proof}

Completeness of the logics $L5$ and $EL5$ w.r.t. algebraic semantics is shown in \cite{lewjlc2} and \cite{lewigpl}, respectively. Following the same strategy, we sketch out a completeness proof of $L5^{AC}_N$ w.r.t. the class of $L5^{AC}_N$-models. It is enough to show that every consistent set of formulas is satisfied by some interpretation based on an $L5^{AC}_N$-model. Let $\varPhi\subseteq Fm$ be consistent. By Zorn's Lemma, $\varPhi$ has a maximal consistent extension $\varPsi$. We construct a model for $\varPsi$. Let $\approx_\varPsi$ be the relation on formulas defined by $\varphi\approx_\varPsi \psi :\Leftrightarrow \varPsi\vdash\varphi\equiv\psi$. Using the Substitution Principle (SP), one shows that $\approx_\Psi$ is a congruence relation on the resulting `algebra of formulas', where the connectives, modal and epistemic operators are viewed as operations on $Fm$ (cf \cite{lewjlc2, lewigpl}). For $\varphi\in Fm$, we denote by $\overline{\varphi}:=\overline{\varphi}_\Psi$ the congruence class of $\varphi$ modulo $\approx_\Psi$. Then the sets $M=\{\overline{\varphi}\mid\varphi\in Fm\}$ and $\mathit{TRUE}=\{\overline{\varphi}\mid\varphi\in\varPsi\}$ along with the following operations on $M$: $f_\bot:=\overline{\bot}$, $f_\top:=\overline{\top}$, $f_\square(\overline{\varphi}):=\overline{\square\varphi}$, $f_{K_i}(\overline{\varphi}):=\overline{K_i\varphi}$, $f_{C_G}(\overline{\varphi}):=\overline{C_G\varphi}$, $f_*(\overline{\varphi},\overline{\psi}):=\overline{\varphi * \psi}$, where $*\in\{\vee,\wedge,\rightarrow\}$, are all well-defined. We claim that this yields an $L5^{AC}_N$-model $\mathcal{M}$. Clearly, $\mathcal{M}$ is based on a Heyting algebra: all $\mathit{IPC}$-theorems of the form $\varphi\leftrightarrow\psi$ are contained in $\varPsi$. Then rule (AN) implies $\varPsi\vdash\varphi\equiv\psi$, i.e. $\overline{\varphi}=\overline{\psi}$, hence all equations that determine a Heyting algebra are satisfied. $\mathit{TRUE}\subseteq M$ is an ultrafilter because $\varPsi$ is maximal consistent. By Lemma \ref{500}(d), for any $m\in M$: $f_\square(m)\in\mathit{TRUE}$ iff $m=f_\top$. The axioms (iv) and (v) then ensure truth condition (ii) of a model, cf. Definition \ref{810}. Truth condition (i), the Disjunction Property, now follows by axiom (i). From Lemma \ref{510}(b) it follows that $K_i\top$ and $C_G\top$ are theorems. This, along with the distribution axioms, implies that the sets $\mathit{BEL_i}(F)$ and $\mathit{COMMON_G}(F)$ are filters, for any prime filter $F$ of the Heyting algebra. Also the remaining truth conditions (iii)(c)--(g) of an $L5^{AC}_N$-model follow straightforwardly from corresponding axioms. As in similar situations, it might be more comfortable to use Theorem \ref{870} here to verify all these conditions. Let $\gamma\in M^V$ be the assignment defined by $x\mapsto\overline{x}$. Then $\gamma(\varphi)=\overline{\varphi}$, for every $\varphi\in Fm$. Thus, $\varphi\in\varPsi\Leftrightarrow\overline{\varphi}\in\mathit{TRUE}\Leftrightarrow\gamma(\varphi)\in\mathit{TRUE}\Leftrightarrow (\mathcal{M},\gamma)\vDash\varphi$. In particular, $(\mathcal{M},\gamma)\vDash\varPhi\subseteq\varPsi$. Hence, every set consistent in $L5^{AC}_N$ is satisfied by an $L5^{AC}_N$-model; and analogously for $L5$, $EL5$ and $L5^{AC^-}_N$. 

\begin{theorem}[Completeness]\label{600}
Let $\mathcal{L}$ be the logic $L5$, $EL5$, $L5^{AC^-}_N$ or $L5^{AC}_N$. For any set $\varPhi\cup\{\varphi\}$ of the respective object language, $\varPhi\Vdash_\mathcal{L}\varphi$ implies $\varPhi_\mathcal{L}\vdash\varphi$.
\end{theorem}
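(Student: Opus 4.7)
The plan is to prove the contrapositive in a uniform Lindenbaum--Tarski style, following the blueprint already sketched in the paragraph preceding the statement, and to handle all four systems by matching axioms to algebraic truth conditions case by case. First I would reduce strong completeness to satisfiability of consistent sets: since $\varphi\vee\neg\varphi$ is an axiom of each $\mathcal{L}$, a standard classical argument shows that $\varPhi\not\vdash_\mathcal{L}\varphi$ implies $\varPhi\cup\{\neg\varphi\}$ is $\mathcal{L}$-consistent, and Zorn's Lemma extends it to a maximal consistent $\varPsi\supseteq\varPhi\cup\{\neg\varphi\}$. It therefore suffices to construct an $\mathcal{L}$-model satisfying $\varPsi$.

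The second step is to build the canonical Lindenbaum--Tarski model. Define $\varphi\approx_\varPsi\psi$ iff $\varPsi\vdash\varphi\equiv\psi$. By the Substitution Principle of Lemma \ref{500}(c), $\approx_\varPsi$ is a congruence with respect to the connectives and to each of $\square$, $K_i$, $C_G$. Take $M$ to be the set of equivalence classes $\overline{\varphi}$, equip it with the induced operations $f_*$, $f_\square$, $f_{K_i}$, $f_{C_G}$, and designate $\mathit{TRUE}:=\{\overline{\varphi}\mid\varphi\in\varPsi\}$; maximal consistency together with (TND) makes $\mathit{TRUE}$ an ultrafilter. All $\mathit{IPC}$-valid biconditionals yield identities $\equiv$ in $\varPsi$ via rule (AN), so every equation defining a Heyting algebra holds in $\mathcal{M}$.

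The third step is to verify the truth conditions using the algebraic reformulation of Theorem \ref{870}. Lemma \ref{500}(d) together with axioms (iv) and (v) gives the two-valued behaviour of $f_\square$, and axiom (i) then yields the Disjunction Property. On the epistemic side, the distribution axioms (vii), (ix) together with the theorems $K_i\top$ and $C_G\top$ coming from Lemma \ref{510}(b) ensure that $\mathit{BEL_i}(F)$ and $\mathit{COMMON_G}(F)$ are filters for every prime filter $F$; axiom (vi) supplies $f_{K_i}(m)\le m$; the disjunction schemes (viii) and, where applicable, (x) make these filters prime; and axioms (xii), (xiii), (xiv), (xv) translate directly into the inclusions (E), (F), (G) and the introspection condition of Theorem \ref{870}. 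For $L5^{AC^-}_N$ one simply omits (x) and (xv) and weakens (c)* to (c), retaining (I); the reducts for $EL5$ and $L5$ are immediate. Finally, the canonical assignment $\gamma(x):=\overline{x}$ extends by induction to $\gamma(\chi)=\overline{\chi}$ for every $\chi\in Fm$, whence $(\mathcal{M},\gamma)\vDash\chi$ iff $\chi\in\varPsi$, so $(\mathcal{M},\gamma)$ satisfies $\varPhi$ while refuting $\varphi$.

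The main obstacle is not conceptual but a matter of careful bookkeeping: checking that the congruence argument for SP really extends through $K_i$ and $C_G$ (which relies on having $\square(\varphi\leftrightarrow\psi)\rightarrow\square(K_i\varphi\leftrightarrow K_i\psi)$ and the analogue for $C_G$, both obtained as in Lemma \ref{500}(c) via (xi), (ix), (iii)), and verifying in the $L5^{AC}_N$ case that primeness of $\mathit{COMMON_G}(F)$ genuinely reduces to axiom (x). Everything else is a straightforward adaptation of the non-Fregean Lindenbaum--Tarski completeness proofs already given for $L5$ and $EL5$ in the cited references.
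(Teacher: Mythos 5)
Your proposal is correct and follows essentially the same route as the paper: reduction to satisfiability of maximal consistent sets via Zorn's Lemma, the Lindenbaum--Tarski quotient modulo propositional identity $\approx_\varPsi$ using the Substitution Principle, verification of the truth conditions through Theorem \ref{870} (with Lemma \ref{500}(d) for $f_\square$, axiom (i) for the Disjunction Property, and Lemma \ref{510}(b) plus the distribution and disjunction axioms for the epistemic conditions), and the canonical assignment $x\mapsto\overline{x}$.
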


\section{Intended Models}

Intuitively, by an intended model we mean a model where common knowledge has its intended meaning, i.e. $C_G\varphi$ is true iff $\bigwedge_{n\in\mathbb{N}} E_G^n\varphi$ is true, for any formula $\varphi$ and any group $G$.\footnote{Of course, this basic intuition also holds for our stronger introspective notion of common knowledge which additionally has the property: $C_G\varphi\leftrightarrow C_G C_G\varphi$.} Since infinite conjunctions cannot be expressed in the finitary object language, our axiomatization ensures only that truth of $C_G\varphi$ implies the truth of all $E_G^n\varphi$, $n\in\mathbb{N}$. Thus, a non-intended model is a model where for some formula $\varphi$, $E_G^n\varphi$ is true for every $n\in\mathbb{N}$, but $C_G\varphi$ is false. Both intended as well as non-intended models exist as we shall see at the end of this section. 

We would like to point out here that it is not unusual that the intended properties of a formalized concept are not completely captured by the axiomatization but are instead represented by a \textit{standard model} or by certain \textit{intended models}. The phenomenon is well-known from classical first-order logic. Compactness arguments generally show that a given first-order theory with infinite models has also models with counter-intuitive or unexpected properties, non-standard elements, etc. The existence of such non-intended (or non-standard) models is unproblematic as long as enough intended and meaningful models exist. 

In the following, we characterize intended $L5^{AC^-}_N$- and $L5^{AC}_N$-models. For this purpose, we adopt and apply some notions and results from \cite{lewsl} where common knowledge is axiomatized and modeled in essentially the same way, although this is done in a general, classical non-Fregean setting. In this section, by a model we always mean an $L5^{AC^-}_N$- or an $L5^{AC}_N$-model.

\begin{definition}\label{1070}
Let $\mathcal{M}$ be a model. For every $i\in I$ and every group $G$, we put $\mathit{BEL_i} := \mathit{BEL_i}(\mathit{TRUE})$ and $\mathit{COMMON_G}:=\mathit{COMMON_G}(\mathit{TRUE})$, which are the sets of propositions known by agent $i$, and the sets of propositions that are common knowledge in $G$, respectively.
\end{definition}

\begin{definition}\label{1070}
Let $\mathcal{M}$ be a model, $G$ be a group. We say that a set $X\subseteq M$ of propositions is closed under $G$ if the following hold:\\
(a) $X \subseteq \bigcap_{i\in G} \mathit{BEL_i}$, i.e. the propositions of $X$ are known by all agents of $G$,\\
(b) if $m\in X$ and $i\in G$, then $f_{K_i}(m)\in X$. \\
By $\mathit{GREATEST_G}$ we denote the greatest set closed under $G$, i.e. the union of all sets which are closed under $G$.
\end{definition}

Formally, the set $\mathit{COMMON_G}$ represents common knowledge in $G$. On the other hand, the set $\mathit{GREATEST_G}$ captures the concept of common knowledge in $G$ in an \textit{intuitive way}. Do these two sets coincide? By the definitions, we have:

\begin{lemma}\label{1090}
Let $\mathcal{M}$ be a model. For any group $G$, $\mathit{COMMON_G}$ is closed under $G$. In particular, $\mathit{COMMON_G} \subseteq \mathit{GREATEST_G} \subseteq \bigcap_{i\in G} \mathit{BEL_i}$.
\end{lemma}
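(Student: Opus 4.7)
The proof is a direct unpacking of the definitions, so I will organize the plan around the two bullet points (a) and (b) of Definition \ref{1070} (the second one so labelled), applied with the prime filter $F = \mathit{TRUE}$.

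First I would check that $\mathit{COMMON_G}$ satisfies clause (a), i.e.\ $\mathit{COMMON_G} \subseteq \bigcap_{i\in G} \mathit{BEL_i}$. This is immediate from truth condition (iii)(d) of Definition \ref{810}: for every prime filter $F$ and every $i\in G$ we have $\mathit{COMMON_G}(F)\subseteq \mathit{BEL_i}(F)$. Specialising to $F=\mathit{TRUE}$ and intersecting over $i\in G$ yields (a). Next I would check clause (b): if $m\in\mathit{COMMON_G}$ and $i\in G$, then $f_{K_i}(m)\in\mathit{COMMON_G}$. This is exactly truth condition (iii)(e) of Definition \ref{810} with $F=\mathit{TRUE}$. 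Hence $\mathit{COMMON_G}$ is closed under $G$, which is the first assertion of the lemma.

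For the inclusion $\mathit{COMMON_G}\subseteq\mathit{GREATEST_G}$, I would simply invoke the definition of $\mathit{GREATEST_G}$ as the union of all sets closed under $G$: since $\mathit{COMMON_G}$ is one such set, it lies in that union. For the inclusion $\mathit{GREATEST_G}\subseteq\bigcap_{i\in G}\mathit{BEL_i}$, I would note that clause (a) is manifestly preserved under arbitrary unions, so every $m\in\mathit{GREATEST_G}$ belongs to some $X$ closed under $G$, and then $m\in X\subseteq \bigcap_{i\in G}\mathit{BEL_i}$.

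There is no real obstacle here; the argument is purely a verification against Definitions \ref{810} and \ref{1070}. The only small thing worth highlighting in the write-up is that both clauses (a) and (b) defining closure under $G$ pass to arbitrary unions, which both justifies calling $\mathit{GREATEST_G}$ the greatest set closed under $G$ and immediately gives $\mathit{GREATEST_G}\subseteq\bigcap_{i\in G}\mathit{BEL_i}$.
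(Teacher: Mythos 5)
Your proof is correct and matches the paper's (implicit) argument: the paper states this lemma with only the remark that it follows ``by the definitions,'' and your unpacking via conditions (iii)(d) and (iii)(e) of Definition \ref{810} specialized to the prime filter $F=\mathit{TRUE}$, together with the observation that closure under $G$ is preserved by unions, is exactly the intended verification.
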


Relative to an interpretation $(\mathcal{M},\gamma)$, common knowledge given as an infinite conjunction according to \eqref{50} is expressed in the following way:
$(\mathcal{M},\gamma)\vDash C_G\varphi$ $\Leftrightarrow$ for all $r\ge 0$ and for all sequences $(i_1,...,i_r)$ of agents of $G$, it holds that $(\mathcal{M},\gamma)\vDash K_{i_1} K_{i_2} ... K_{i_r}\varphi$.\footnote{Of course, repetitions of agents are allowed in the sequences. For $r=0$, we define $K_{i_1} K_{i_2} ... K_{i_r}\varphi := \varphi$.} \\
If $\varphi\in Fm$ denotes the proposition $m\in M$, i.e. $\gamma(\varphi)=m$, then that is equivalent to:
$f_{C_G}(m)\in \mathit{TRUE}$ $\Leftrightarrow$ for all $r \ge 0$ and for all sequences $(i_1, ..., i_r)$ of agents of $G$, it holds that $f_{K_{i_1}}(f_{K_{i_2}}(...(f_{K_{i_r}}(m))...)) \in \mathit{TRUE}$.\footnote{Again, for $r=0$ we let $f_{K_{i_1}}(f_{K_{i_2}}(...(f_{K_{i_r}}(m))...)):=m$.}

\begin{definition}\label{1094}
Suppose $\mathcal{M}$ is a model. Let $m\in M$ and $G$ be a group. We call the set $X_{G,m}$ given by all elements $f_{K_{i_1}}(f_{K_{i_2}}(...(f_{K_{i_r}}(m))...))$, where $r \ge 0$ and $(i_1, i_2, ..., i_r)$ is any sequence of agents of $G$, the closure of $m$ under $G$ or the $G$-closure of $m$.
\end{definition}

\begin{lemma}\label{1100}
Let $\mathcal{M}$ be a model. For any $m\in M$, $f_{C_G}(m)\in \mathit{TRUE}$ implies $X_{G,m} \subseteq \mathit{TRUE}$.
\end{lemma}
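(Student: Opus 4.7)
The plan is to prove the stronger lattice inequality $f_{C_G}(m)\le a$ for every $a\in X_{G,m}$. Since $\mathit{TRUE}$ is an ultrafilter, hence in particular upward-closed in the Heyting order, this immediately yields $a\in\mathit{TRUE}$ from the hypothesis $f_{C_G}(m)\in\mathit{TRUE}$.

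A small auxiliary observation will do most of the work: for any $b\in M$ one has $f_{C_G}(b)\le b$. Indeed, since groups are non-empty, fix any $i\in G$; condition (E) of Theorem \ref{870} gives $f_{C_G}(b)\le f_{K_i}(b)$, and condition (I)$^*$ (factivity of knowledge) gives $f_{K_i}(b)\le b$. Specialising this to $b=m$ already handles the $r=0$ case of the definition of $X_{G,m}$.

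For the general case I induct on the length $r$ of the sequence $(i_1,\dots,i_r)$ of agents of $G$. The strategy is first to build up the full tower of nested $K$-operators inside $f_{C_G}$ using condition (F), and only then to strip the outer $f_{C_G}$ off via the auxiliary observation. Set $a_0:=m$ and $a_k:=f_{K_{i_{r-k+1}}}(a_{k-1})$ for $k=1,\dots,r$, so that $a_r=f_{K_{i_1}}(\cdots f_{K_{i_r}}(m)\cdots)$ is precisely the element of $X_{G,m}$ I want to reach. At the $k$-th step, condition (F) applied with the agent $i_{r-k+1}\in G$ and the proposition $a_{k-1}$ yields $f_{C_G}(a_{k-1})\le f_{C_G}(a_k)$. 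Chaining these $r$ inequalities gives $f_{C_G}(m)\le f_{C_G}(a_r)$, and a final appeal to the auxiliary observation with $b=a_r$ delivers $f_{C_G}(a_r)\le a_r$. Hence $f_{C_G}(m)\le a_r$, which completes the induction.

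The only thing requiring care is the bookkeeping: the tower must be grown from the innermost $f_{K_{i_r}}$ outward, since (F) only lets us insert a fresh $f_{K_i}$ immediately inside an existing $f_{C_G}$. No deeper ingredient is needed; in particular, introspection of common knowledge (axiom (xv) / the extra clause of Theorem \ref{870}) is never invoked, so the argument works uniformly for both $L5^{AC^-}_N$- and $L5^{AC}_N$-models.
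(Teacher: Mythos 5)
Your proof is correct and is essentially the paper's argument recast in the algebraic language of Theorem~\ref{870}: chaining condition (F) from the innermost $f_{K_{i_r}}$ outward is exactly the paper's successive application of truth condition (iii)(e) of Definition~\ref{810}, and your auxiliary inequality $f_{C_G}(b)\le b$ (from (E) and (I)$^*$) plays the role of the paper's concluding inclusion $\mathit{COMMON_G}\subseteq\mathit{TRUE}$. The only cosmetic difference is that you establish the stronger lattice inequality $f_{C_G}(m)\le a$, which is legitimate because in this section ``model'' means an $L5^{AC^-}_N$- or $L5^{AC}_N$-model, so factivity (I)$^*$ is available.
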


\begin{proof}
Let $f_{C_G}(m)\in \mathit{TRUE}$, i.e. $m\in \mathit{COMMON_G}$. Applying successively truth condition (iii)(e) of a model (Definition \ref{810}), one recognizes that any element $f_{K_{i_1}}(f_{K_{i_2}}(...(f_{K_{i_r}}(m))...))$ belongs to $\mathit{COMMON_G}$, where $i_1,...,i_r\in G$ and $r\ge 0$. Hence, $X_{G,m}\subseteq \mathit{COMMON_G}\subseteq \mathit{TRUE}$, and the assertion follows.
\end{proof}

The next result, also adopted from \cite{lewsl}, gives a sufficient and necessary condition for common knowledge having the intended meaning in a given model (independently of the fact whether we are dealing with the basic notion or with introspective common knowledge). 

\begin{theorem}[\cite{lewsl}]\label{1200}
Let $\mathcal{M}$ be a model, $G$ be a group. The following conditions are equivalent:
\begin{enumerate}
\item $\mathit{COMMON_G} = \mathit{GREATEST_G}$
\item For any $m\in M$, $f_{C_G}(m)\in \mathit{TRUE}$ iff $X_{G,m} \subseteq \mathit{TRUE}$.
\end{enumerate}
\end{theorem}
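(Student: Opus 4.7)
The plan is to prove the equivalence by splitting each direction into two sub-inclusions and exploiting Lemmas \ref{1090} and \ref{1100}, which already supply half of each direction without any extra assumption. The remaining halves both hinge on a single observation: $X_{G,m}$ is a natural candidate for a set closed under $G$, and $\mathit{GREATEST_G}$ is itself closed under $G$.

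For $(1)\Rightarrow(2)$, the implication $f_{C_G}(m)\in\mathit{TRUE}\Rightarrow X_{G,m}\subseteq\mathit{TRUE}$ is exactly Lemma \ref{1100} and needs no hypothesis. For the converse, I assume $X_{G,m}\subseteq\mathit{TRUE}$ and aim to show $m\in \mathit{COMMON_G}$, which by $(1)$ reduces to $m\in \mathit{GREATEST_G}$. To this end I verify that $X_{G,m}$ is itself closed under $G$: condition (b) of Definition \ref{1070} holds because prepending any $i\in G$ to a sequence $(i_1,\dots,i_r)$ keeps the resulting element inside $X_{G,m}$; condition (a) asks that $f_{K_i}(n)\in\mathit{TRUE}$ for every $n\in X_{G,m}$ and every $i\in G$, but $f_{K_i}(n)$ again lies in $X_{G,m}$, so by hypothesis in $\mathit{TRUE}$. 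Hence $X_{G,m}\subseteq \mathit{GREATEST_G}=\mathit{COMMON_G}$, and taking the trivial sequence ($r=0$) gives $m\in \mathit{COMMON_G}$.

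For $(2)\Rightarrow(1)$, the inclusion $\mathit{COMMON_G}\subseteq \mathit{GREATEST_G}$ is Lemma \ref{1090}. Conversely, let $m\in \mathit{GREATEST_G}$. By $(2)$ it suffices to show $X_{G,m}\subseteq\mathit{TRUE}$. First I note that $\mathit{GREATEST_G}$, as the union of $G$-closed sets, is itself closed under $G$ (both (a) and (b) pass through unions). Hence by induction on the length $r$ of the sequence $(i_1,\dots,i_r)$, every element $f_{K_{i_1}}(\dots(f_{K_{i_r}}(m))\dots)$ lies in $\mathit{GREATEST_G}$. Now condition (a) yields $\mathit{GREATEST_G}\subseteq\bigcap_{i\in G}\mathit{BEL_i}$, and truth condition (iii)(c) of Definition \ref{810} gives $\mathit{BEL_i}(\mathit{TRUE})\subseteq \mathit{TRUE}$; picking any $i\in G$ (which exists since $G\neq\varnothing$) then delivers $X_{G,m}\subseteq\mathit{TRUE}$, and $(2)$ concludes $f_{C_G}(m)\in\mathit{TRUE}$.

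The only delicate point, and the one I would watch most carefully, is the passage from "$n$ is known by some $i\in G$" to "$n$ is true". Being $G$-closed only tells us that the elements sit inside the belief sets $\mathit{BEL_i}$, not directly inside $\mathit{TRUE}$; the bridge is the factivity-style inclusion $\mathit{BEL_i}(\mathit{TRUE})\subseteq\mathit{TRUE}$ guaranteed by Definition \ref{810}(iii)(c), and this is exactly where the nonemptiness of $G$ enters. Once this link is used, both directions collapse to routine manipulations of the definitions.
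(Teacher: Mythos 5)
Your proof is correct and follows essentially the same route as the paper: Lemmas \ref{1090} and \ref{1100} give one inclusion in each direction, and the remaining halves come from observing that $X_{G,m}$ is closed under $G$ when $X_{G,m}\subseteq\mathit{TRUE}$ (so it sits inside $\mathit{GREATEST_G}$) and that $\mathit{GREATEST_G}$, being itself $G$-closed and contained in $\bigcap_{i\in G}\mathit{BEL_i}\subseteq\mathit{TRUE}$, forces $X_{G,m}\subseteq\mathit{TRUE}$ for each of its elements. Your explicit verification of the closure conditions and of the bridge $\mathit{BEL_i}(\mathit{TRUE})\subseteq\mathit{TRUE}$ only spells out steps the paper leaves implicit.
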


\begin{proof}
Let $\mathit{COMMON_G} = \mathit{GREATEST_G}$. By Lemma \ref{1100}, we know that for any $m\in M$, $f_{C_G}(m)\in \mathit{TRUE}$ implies $X_{G,m} \subseteq \mathit{TRUE}$. Suppose $m\in M$ and $X_{G,m} \subseteq \mathit{TRUE}$. Then the $G$-closure of $m$, $X_{G,m}$, is closed under $G$ in the sense of Definition \ref{1070}. Since $GREATEST_G$ is the greatest set closed under $G$, we have $m\in X_{G,m}\subseteq \mathit{GREATEST_G}=\mathit{COMMON_G}$. Hence, $f_{C_G}(m)\in \mathit{TRUE}$ and (i) $\Rightarrow$ (ii) holds true. Now, suppose (ii) holds true and $m\in \mathit{GREATEST_G}$. Since $\mathit{GREATEST_G}$ is closed under $G$, we have $f_{K_{i_1}}(f_{K_{i_2}}(...(f_{K_{i_r}}(m))...))\in\mathit{GREATEST_G}$, for any $r\ge 0$ and any sequence $(i_1, i_2, ..., i_r)$ of agents of $G$. Thus, $X_{G,m}\subseteq \mathit{GREATEST_G}\subseteq \mathit{TRUE}$. By (ii), $f_{C_G}(m)\in \mathit{TRUE}$, i.e. $m\in \mathit{COMMON_G}$. Thus, $\mathit{GREATEST_G}\subseteq \mathit{COMMON_G}$ and (i) follows.
\end{proof}

\begin{definition}\label{1300}
A model is said to be an intended model if for each group $G$,  $\mathit{COMMON_G} = \mathit{GREATEST_G}$.
\end{definition}

\begin{corollary}\label{1340}
Let $\mathcal{M}$ be a model. Suppose that for all $m\in M$ and all groups $G$, the $G$-closure of $m$ is finite (and thus its infimum exists) and the following holds: $f_{C_G}(m)\in\mathit{TRUE} \Leftrightarrow \bigwedge X_{G,m}\in\mathit{TRUE}$. Then $\mathcal{M}$ is an intended model.
\end{corollary}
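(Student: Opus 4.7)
The plan is to reduce the statement to Theorem \ref{1200} via the characterization of intended models it provides. By that theorem, showing $\mathit{COMMON_G} = \mathit{GREATEST_G}$ for a fixed group $G$ is equivalent to establishing that for every $m\in M$,
\[
f_{C_G}(m)\in \mathit{TRUE} \ \Longleftrightarrow\ X_{G,m}\subseteq \mathit{TRUE}.
\]
So the task reduces to bridging the hypothesis, which speaks about the single proposition $\bigwedge X_{G,m}$, and the conclusion, which speaks about every element of the (finite) set $X_{G,m}$ separately.

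Concretely, I would prove the equivalence
\[
\bigwedge X_{G,m}\in \mathit{TRUE}\ \Longleftrightarrow\ X_{G,m}\subseteq \mathit{TRUE}
\]
using only the fact that $\mathit{TRUE}$ is an ultrafilter, and in particular a filter of the Heyting reduct. Enumerate $X_{G,m}=\{m_1,\ldots,m_k\}$, which is legitimate by the finiteness assumption. For the $(\Rightarrow)$ direction, note that $\bigwedge X_{G,m}\le m_j$ for each $j$, and filters are upward closed (an easy consequence of the filter definition in the paper: from $m\le m'$ one has $f_\rightarrow(m,m')=f_\top\in F$, so $m\in F$ forces $m'\in F$). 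For the $(\Leftarrow)$ direction, use that filters are closed under finite meets (which follows iteratively from the fact that in a Heyting algebra $f_\rightarrow(m,f_\rightarrow(m',f_\wedge(m,m')))=f_\top$, applied together with the implication clause of the filter definition). Chaining the two equivalences then yields precisely condition (ii) of Theorem \ref{1200}, from which (i), that is, $\mathit{COMMON_G}=\mathit{GREATEST_G}$, follows. Since $G$ was arbitrary, $\mathcal{M}$ is an intended model in the sense of Definition \ref{1300}.

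I do not expect any genuine obstacle: the content of the corollary is essentially the observation that the ultrafilter $\mathit{TRUE}$ respects finite meets in both directions, so whenever the $G$-closure is finite the semantic infinite conjunction collapses to a single element of $M$ whose truth is equivalent to the truth of all conjuncts. The only point that deserves explicit mention in the write-up is the justification of upward closure and finite meet-closure of filters from the implication-style definition given earlier in the paper, since those two properties are precisely what lets the hypothesis be translated into the inclusion $X_{G,m}\subseteq \mathit{TRUE}$ required by Theorem \ref{1200}.
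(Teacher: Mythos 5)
Your proof is correct and is exactly the argument the paper intends (the corollary is left without an explicit proof, being an immediate consequence of Theorem \ref{1200}): reduce to condition (ii) of that theorem and observe that the filter $\mathit{TRUE}$ is upward closed and closed under finite meets, so for finite $X_{G,m}$ one has $\bigwedge X_{G,m}\in\mathit{TRUE}$ iff $X_{G,m}\subseteq\mathit{TRUE}$. Your justifications of upward closure and finite meet-closure from the implication-style filter definition are also accurate.
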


\begin{example}
Simple examples of models are given by linearly ordered Heyting algebras. Note that such Heyting algebras always have the Disjunction Property; actually, all proper filters are prime. We modify and extend an example from \cite{lewigpl}. It is based on the Heyting algebra over the closed interval $M:=[0,1]$ of reals with its usual linear ordering and the unique ultrafilter $\mathit{TRUE}=(0,1]$. To agents $i=1,2,...,N$ we assign elements $b(1)< b(2)< ... < b(N) \in (0,1)$, respectively, and consider the prime filters $\mathit{BEL_i}=\{m\in M\mid b(i)\le m\}$. Then $\mathit{BEL_1}\supsetneq \mathit{BEL_2}\supsetneq ... \supsetneq \mathit{BEL_N}$ are the sets of propositions known by agent $i$, respectively. We define $f_{K_i}(m):=m$ if $m\in \mathit{BEL_i}$, and $f_{K_i}(m):=0$ otherwise. Then follows that $\mathit{BEL_i}(F)=F\cap \mathit{BEL_i}$, for any prime filter $F$. Thus, each $\mathit{BEL_i}(F)$ is a filter contained in $F$, in accordance with the truth conditions (iii)(a) and (iii)(c)* of Definitions \ref{810} and \ref{850}. For each group $G$ and $m\in M$, we define $f_{C_G}(m) := f_{K_{i_G}}(m)$, where $i_G\in G$ is the greatest number referring to an agent of $G$. Then common knowledge in $G$ is given by $\mathit{COMMON_G}=\mathit{BEL_{i_G}}=\bigcap\{\mathit{BEL_j}\mid j\in G\}$ and $\mathit{COMMON_G}(F)=\mathit{BEL_{i_G}}(F)$ for any prime filter $F$. Of course, we put $f_\square(1) := 1$ and $f_\square(m) := 0$ for $0\le m < 1$. Now one recognizes that all truth conditions of an $L5^{AC}_N$-model are satisfied (use the Definitions \ref{810} and \ref{850} or/and Theorem \ref{870}). It is an intended model since for each $G$, $\mathit{COMMON_G}$ is the greatest set closed under $G$: it is clear that $\mathit{COMMON_G}$ is closed under $G$; and it is the greatest set with that property because for any $m\in M\smallsetminus \mathit{COMMON_G}$, we have $m\notin\mathit{BEL_{i_G}}$. Unfortunately, common knowledge in $G$ is trivial in the sense that it coincides with `everyone in $G$ knows': $E_G\varphi$ is true iff $C_G\varphi$ is true. We modify the model in the following way. We only change common knowledge in the group $G=\{1,...,N\}$ of all agents and leave all other definitions as before. Let $c$ be a real number such that $b(N) < c < 1$. Then we consider $\mathit{COMMON_G}:=\{m\in M\mid m\ge c\}\subsetneq \mathit{BEL_N}$ and define $f_{C_G}(m):=m$ if $m\in \mathit{COMMON_G}$, and $f_{C_G}(m):=0$ otherwise. Again, one verifies that the resulting structure is an $L5^{AC}_N$-model. Since $\mathit{COMMON_G}$ is a proper subset of $\mathit{BEL_N}$, common knowledge in $G$ is no longer trivial, i.e. it is strictly stronger than `everyone knows'. However, the resulting model is not an intended one: $\mathit{COMMON_G}\subsetneq\mathit{GREATEST_G}=\mathit{BEL_N}$. Finally, we construct an intended model with non-trivial common knowledge. For $i=1,...,N$, the sets $\mathit{BEL_i}$ are defined as before. For the singleton group $G=\{1\}$, we put $\mathit{COMMON_{\{1\}}} := \mathit{BEL_1}$, and for any $m\in M$: $f_G(m):=f_{K_1}(m)$, with $f_{K_1}$ defined as below. For all other groups $G$, we define, with the same real number $c$ as above, $\mathit{COMMON_G} := P :=\{m\in M\mid m\ge c\}$, and $f_{C_G}(m):=m$ if $m\in P$, and $f_{C_G}(m):=0$ otherwise. Thus, all groups distinct from the singleton group $\{1\}$ have exactly the same common knowledge given by the prime filter $P=\{m\in M\mid m\ge c\}\subsetneq \mathit{BEL_N}$. The $f_{K_i}$, $i=1,...,N$, are now defined in the following way:
\begin{equation*}
f_{K_i}(m)= 
\begin{cases}
\begin{split}
&b_1,\text{ if }m\in \mathit{BEL_i}\smallsetminus P\\
&m, \text{ if }m\in P\\
&0,\text{ else}
\end{split}
\end{cases}
\end{equation*}
Notice that for any agent $i \neq 1$, $m\notin P$ implies $f_{K_i}(m)\notin\mathit{BEL_i}$. Thus, for all groups $G\neq \{1\}$, $P=\{m\in M\mid m\ge c\}$ is the greatest closed set under $G$. And for $G=\{1\}$, $\mathit{BEL_1}=\mathit{COMMON_{\{1\}}}$ is the greatest set closed under $G$. Hence, the eventual model is an intended model. Of course, $f_\square$ is defined as before. Similarly as in the previous example, one checks that all truth conditions of an $L5^{AC}_N$-model are satisfied. Common knowledge in any group distinct from the singleton $\{1\}$ is not trivial, i.e. it is stronger than `everyone knows': $P$ is a proper subset of each $\mathit{BEL_i}$. This model can be transformed into an intended $L5^{AC^-}_N$-model modifying for some $G\neq\{1\}$ the function $f_{C_G}$ in the following way: Let $d$ be a real such that $0< d < c$. Then define $f_{C_G}(m):=m-d$ if $m\in \mathit{COMMON_G}=P$, and $f_{C_G}(m):=0$ otherwise. Now there are some $m\in \mathit{COMMON_G}$ such that $f_{C_G}(m)\notin\mathit{COMMON_G}$. Hence, the model cannot be an $L5^{AC}_N$-model. But the truth conditions of an $L5^{AC^-}_N$-model are still satisfied.
\end{example}

\end{document}